\documentclass{amsart}
\usepackage{bm}
\usepackage{verbatim}

\newcommand{\fo}{\mathfrak{o}}

\renewcommand{\d}{\mathrm{d}}

\newcommand{\End}{\operatorname{End}}

\renewcommand{\Re}{\operatorname{{Re}}}
\renewcommand{\Im}{\operatorname{{Im}}}
\newcommand{\img}{\operatorname{{img}}}
\newcommand{\rank}{\operatorname{{rank}}}
\newcommand{\spn}{\operatorname{{span}}}
\newcommand{\CH}{\mathrm{CH}}
\newcommand{\LHS}{\text{LHS}}

\newcommand{\SL}{\mathrm{SL}}
\newcommand{\SO}{\mathrm{SO}}
\newcommand{\so}{\mathfrak{so}}

\newcommand{\fg}{\mathfrak{g}}

\renewcommand{\S}{\mathbf{S}}
\newcommand{\Rset}{\mathbb{R}}

\newcommand{\Cset}{\mathbb{C}}

\newcommand{\bOmega}{\boldsymbol{\Omega}}
\newcommand{\be}{\boldsymbol{e}}
\newcommand{\bell}{\boldsymbol{\ell}}
\renewcommand{\bm}{\boldsymbol{m}}
\newcommand{\bn}{\boldsymbol{n}}
\newcommand{\bo}{\boldsymbol{o}}
\newcommand{\biota}{\boldsymbol{\iota}}
\newcommand{\bGamma}{\boldsymbol{\Gamma}}

\newcommand{\bomega}{\boldsymbol{\omega}}
\newcommand{\hbomega}{\hat{\bomega}}
\newcommand{\bA}{\boldsymbol{A}}

\newcommand{\tR}{\tilde{R}}
\newcommand{\tGamma}{\tilde{\Gamma}}
\newcommand{\hM}{\hat{M}}

\newcommand{\hR}{\hat{R}}
\newcommand{\homega}{\hat{\omega}}

\newcommand{\hGamma}{\hat{\Gamma}}
\newcommand{\tPsi}{\tilde{\Psi}}
\newcommand{\tPhi}{\tilde{\Phi}}
\newcommand{\tbGamma}{\tilde{\bGamma}}
\newcommand{\tbomega}{\tilde{\bomega}}

\newcommand{\cR}{\mathcal{R}}

\newtheorem{theorem}{Theorem}
\newtheorem{proposition}[theorem]{Proposition}

\theoremstyle{definition}
\newtheorem{definition}[theorem]{Definition}

\begin{document}

\title{Three-dimensional spacetimes of maximal order}
\author{R. Milson, L. Wylleman}
\email{robert.milson@dal.ca}
\email{lode.wylleman@ugent.be}
\address{Department of Mathematics and Statistics, Dalhousie University, Halifax, Canada}
\address{Department of Mathematical Analysis, Ghent University, B-9000 Ghent, Belgium}
\address{ Mathematical Institute, Department of Mathematics, Utrecht University, 3584 CD Utrecht, The Netherlands}
\address{Department of Mathematics and Natural Sciences, University of Stavanger,  N-4036 Stavanger,
Norway}
\begin{abstract}
  We show that the equivalence problem for three-dimensional
  Lorentzian manifolds requires at most the fifth covariant derivative
  of the curvature tensor.  We prove that this bound is sharp by
  exhibiting a class of 3D Lorentzian manifolds which realize this
  bound. The analysis is based on a three-dimensional analogue of the
  Newman-Penrose formalism, and spinorial classification of the
  three-dimensional Ricci tensor.
\end{abstract}
\maketitle
\section{Introduction and main result}

We report on recent progress concerning the invariant classification
problem for three-dimensional Lorentzian geometries.  In a physical
context,
such geometries arise as exact solutions of three-dimensional
theories of gravity, such as Topologically Massive Gravity (TMG),
New Massive Gravity (NMG) and
extensions of those. We refer to
\cite{cps10} and the introduction of \cite{aa12} for reviews of the
relevant literature. In \cite{cps10} it was stressed that, when
surveying the literature of exact solutions, it is often difficult
to disentangle genuinely new solutions from those that are already
known but written in different coordinate systems.

To tackle this problem one needs a 
{\em coordinate invariant} local characterization of the geometry. A
first step is to use the algebraic classification of the Ricci
tensor, as was done in \cite{cps10} to classify all TMG solutions
known at that time. A {\em complete} answer to the problem (in any
dimension in principle) is provided by the Cartan-Karlhede
algorithm~\cite{cartan,karlhede80}. The key quantities used here are
so-called {\em Cartan invariants}, which are components of the
Riemann tensor and a finite number of its covariant derivatives,
relative to some maximally fixed vector frame
associated to these tensors.

Regarding three-dimensional Lorentzian geometries, we will show in the
present paper that cases where one needs the theoretically maximal
number of five derivatives for a complete classification do exist, but
are limited to the metrics given in our main Theorem \ref{thm:main}
below. This implies that
\begin{quote}
  \emph{any three-dimensional geometric theory of gravity whose field
    equations exclude the metrics of Theorem \ref{thm:main} requires
    at most four covariant derivatives of the Riemann tensor for a
    complete local invariant classification of its exact solutions.}
\end{quote}
In the remainder of this introduction, we will outline the general
mathematical context and background for the main theorem.

Let $(M,g)$ be a smooth, $n$-dimensional pseudo-Riemannian manifold,
and let $(V,\eta)$ be a real inner-product space having the same
dimension and signature as $(M,g)$.  Henceforth, we use $\eta_{ab}$
to raise and lower frame indices, which we denote by
$a,b,c=1,\ldots, n$. Let $O(\eta)$ be the group of automorphisms of
$\eta$, and let $\fo(\eta)$ be
the corresponding Lie algebra of anti self-dual transformations.  An
$\eta$-orthogonal coframe is an inner-product isomorphism
\[ \omega_x: (T_x M,g_x) \to (V,\eta),\quad x\in M.\] Let
$\pi:O(\eta,M)\to M $ denote the principal $O(\eta)$-bundle of all
such.  An $\eta$-orthogonal \emph{moving coframe} is a local section
of this bundle, or equivalently, a collection of 1-forms $\omega^a$
such that
\[ g = \eta_{ab}\, \omega^a \omega^b.  \]
Set
\begin{equation}
  \label{eq:cRpdef}
  \cR^{p} = \otimes^4 V^*\oplus \cdots \oplus \otimes^{4+p}
  V^*
\end{equation}
and let $\hR^{(p)}:O(\eta,M)\to \cR^{p}$ be the canonical,
$O(\eta)$-equivariant map defined by
\begin{equation}
  \label{eq:hRpdef}
  \hR^{(p)} = (\hR_{abcd},\hR_{abcd;e},\ldots, \hR_{abcd;e_1\ldots
    e_p}),
\end{equation}
where the right hand side denotes the lift of the Riemann
curvature tensor and its first $p$ covariant derivatives to
$O(\eta,M)$.

The following definitions are adapted from \cite[Definitions 8.14 and
8.18]{olver}.  Set $r_{-1} = 0$, and let $r_p$ denote the rank of
$\hR^{(p)},\; p=0,1,2,\ldots$.  We say that $(M,g)$ is {\em fully
  regular} if $r_p$ is constant for all $p$. Henceforth we assume that
full regularity holds and let $q=q_M$ be the smallest integer such
that $r_{q-1} = r_q$.  The integer $q-1$ is called the \emph{order} of
the metric \cite{olver,ES}.  It can be shown\cite[Theorem
12.11]{olver} that a fully regular metric of order $q-1$ is classified
by $\hR^{(q)}$, that is by $q$th-order differential invariants.




The maximal order of a pseudo-Riemannian manifold, of fixed dimension
and signature, is of particular interest. Cartan \cite{cartan}
established the upper bound
\[ q\leq n(n+1)/2 = \dim O(\eta,M).\] Karlhede~\cite{karlhede80}
improved Cartan's bound
to
\begin{equation}
  \label{eq:kbound}
  q\leq n+ s_0+1,
\end{equation}
where $s_0$ is the dimension of the automorphism group of the
curvature tensor.
The question of maximal order has received considerable attention in
general relativity ($n=4$, Lorentzian signature)
\cite{coldinv93,MacCAman86,ramvick96}.
In that context, Karlhede's bound is $q\leq 7$; recently, this bound
was shown to be sharp~\cite{milpel08}. The 4-dimensional metrics of
maximal order describe a well-defined class of type $N$ spacetimes
with aligned null-radiation in an anti-deSitter
background~\cite{orr85}.
By contrast, Karlhede's bound in the generic Petrov type I case (for
which $s_0=0$) is $q\leq 5$, but at present we only have an example
of a type I dust solution \cite{wylleman08} with $q=3$.


In this paper, we investigate and classify 3-dimensional Lorentzian
manifolds of maximal order. Our approach is grounded in Karlhede's
refinement of the Cartan equivalence method~\cite{gardner}, which is
based on the notion of curvature normalization~\cite{karlhede80,ES}.
A non-zero three-dimensional curvature tensor has vanishing Weyl
part and is thus represented by its Ricci tensor, which may be
regarded as a
self-adjoint operator on the three-dimensional tangent space.
Generically, the Ricci operator has a finite automorphism group
(whence $s_0=0$).
However, if two eigenvalues coincide or if the trace-free part of
the operator is nilpotent, then $s_0=1$ is possible. Therefore, in
the three-dimensional Lorentzian setting, Karlhede's bound is $q\leq
5$ \cite{sfr08}.  The question then becomes:
\begin{quote}
  \it Does there exist a 4th order, 3-dimensional Lorentzian
  metric, that is to say, a metric that is classified by 5th-order
  differential invariants?
\end{quote}

In 3-dimensional Lorentzian geometry, it is useful to make use of the
real spinor representation of the Lorentz group. Such a spinor
approach provides one with a natural null vector frame formalism.
Moreover, the Petrov-Penrose classification of the curvature spinor
(which, in three dimensions, is equivalent to the null alignment
classification of the Ricci tensor) leads to a slight refinement of
the usual Ricci-Segre classification. This is summarized in the
appendices.

Karlhede's result, which we formulate as Theorem \ref{thm:karlhede}
below, tells us that a metric which is classified by 5th order
invariants, if one exists, is restricted to Petrov type D,
type DZ (like type D, but the doubly aligned null directions are
complex) and type N
geometries.
Below,
we rule out the type DZ and N possibilities, and demonstrate that
the $q=5$ bound is realized for one very particular class of type D
metrics.

\begin{theorem}
  \label{thm:main}
  The order of a curvature-regular, 3-dimensional Lorentzian manifold
  is bounded by
  \[ q-1 \leq 4.\] This bound is sharp; every 4th order
  metric is locally isometric to
  \begin{align}
    \label{eq:ds1}
    &2 (2 T x du + dw) ^2 -  2 du (dx+a du) ,\quad\text{where}\\
    \label{eq:adef}
    & a= \frac{1-e^{4Tw}}{2T} + (2T^2-C) (x-\delta_C)^2 + F(u).
  \end{align}
  Here $x,u,w$ are local coordinates. $C,T$ are real constants such
  that $C+2T^2\neq 0$, and $F(u)$ is an arbitrary real function such
  that
  \begin{equation}
    \label{eq:Fu-maxorder-cond}
    \begin{cases}
      (1+2TF(u))F''(u)\neq 3T(F'(u))^2  & \text{if } C \neq 0,\\
     F'(u)\neq 0 & \text{if } C = 0.
    \end{cases}
  \end{equation}
\end{theorem}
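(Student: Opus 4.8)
The plan is to treat the two assertions of the theorem—the universal bound $q-1\le 4$ and the explicit classification of the extremal metrics—by separate routes, both built on Karlhede's refinement of the Cartan equivalence method together with the three-dimensional spinor formalism. The upper bound comes first and is the easy half. Since the Weyl tensor vanishes in three dimensions, the curvature is carried entirely by the Ricci operator, a self-adjoint operator on the tangent space. I would begin by cataloguing the possible stabilizers of this operator inside $\fo(\eta)$: running through the Ricci--Segre (equivalently Petrov--Penrose) types shows that the isotropy dimension satisfies $s_0\le 1$, with $s_0=1$ occurring only when two eigenvalues coincide or the trace-free part is nilpotent. Substituting $s_0\le 1$ and $n=3$ into Karlhede's bound \eqref{eq:kbound} immediately yields $q\le 5$, i.e.\ $q-1\le 4$.

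The substance of the theorem is sharpness together with uniqueness, and the first reduction is to invoke Theorem~\ref{thm:karlhede}: a metric whose classification genuinely requires $\hR^{(5)}$ must carry a one-parameter residual isotropy persisting through the third covariant derivative, which forces the curvature spinor to be of Petrov type D, DZ, or N. I would then dispose of the N and DZ possibilities. In each case I set up the adapted null coframe $\omega^a$ whose surviving frame freedom is the relevant one-parameter subgroup—the boost $\ell\mapsto\lambda\ell,\ n\mapsto\lambda^{-1}n$ for type D/DZ and the null rotation fixing the quadruple principal spinor for type N—write down the three-dimensional Newman--Penrose equations, and show that the frame-fixing components already present in $\hR^{(2)}$ or $\hR^{(3)}$ necessarily break the residual isotropy, so that the rank sequence stabilizes with $r_3=r_4$ and the order never reaches $4$.

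The heart of the proof, and the main obstacle, is the type D analysis. Here I would align the frame with the two repeated principal null directions, reducing the structure group to the single boost above, and express every spin coefficient and curvature scalar in this gauge. The order-4 requirement is the one rank condition $r_3<r_4=r_5$: the boost-weight data supplied by $\hR^{(0)},\ldots,\hR^{(3)}$ must fail to determine the boost, the invariant count increasing only at the fourth derivative $\hR^{(4)}$. Translating this through the NP equations and the Bianchi and commutator identities produces an overdetermined first-order system for the connection coefficients, and solving it is where the real work lies: one must prove that the integrability conditions drive the connection into a rigid normal form depending on a few constants and a single free function, then integrate the Cartan structure equations $\d\omega^a=-\bGamma^a{}_b\wedge\omega^b$ to recover the coordinate expression. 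I expect the coordinates $(x,u,w)$ to emerge from the two null congruences and the residual boost symmetry, with $F(u)$ and the constants $C,T$ (constrained by $C+2T^2\neq 0$) appearing as the remaining integration freedom. A direct verification that \eqref{eq:ds1}--\eqref{eq:adef} under the genericity conditions \eqref{eq:Fu-maxorder-cond} indeed yields $r_3<r_4=r_5$ then closes the argument, confirming that the constructed family is exactly the class of fourth-order metrics.
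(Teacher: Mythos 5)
Your upper-bound argument and your overall itinerary (restrict to types D, DZ, N; eliminate N and DZ; integrate the type D structure equations) match the paper's, but the mechanism you propose for detecting order $4$ is wrong, and the error is load-bearing. You assert that a metric requiring $\hR^{(5)}$ ``must carry a one-parameter residual isotropy persisting through the third covariant derivative,'' and later that the boost-weight data in $\hR^{(0)},\ldots,\hR^{(3)}$ ``must fail to determine the boost, the invariant count increasing only at the fourth derivative.'' Neither is true, and the second would in fact exclude the very metrics that realize the bound. What the rank inequalities actually force is this: since $r_p$ must increase by at least one at each order up to $q-1$ and $r_{q-1}\le 6$, a $4$th order metric must have $r_0=2$, hence $s_0=1$ and $t_0=0$, and the rank sequence is pinned to $(r_p)=(2,3,4,5,6,6)$ --- an increase of \emph{exactly one at every order}, not only at the fourth. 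Via $r_p=t_p+3-s_p$ this splits into two scenarios (Proposition \ref{prop:pseudostab}): properly $\CH_1$ with $(t_p)=(0,0,1,2,3,3)$, $(s_p)=(1,0,0,0,0,0)$, or properly $\CH_0$ with first-order pseudo-stabilization, $(s_p)=(1,1,\ldots)$. The realized case is the former: in the type D configuration the boost is fixed already at first order by normalizing $\kappa=1$, and the order reaches $4$ because the non-constant invariants $\epsilon$, $D\epsilon$, $D^2\epsilon$ arrive one per derivative at orders $2,3,4$. Your criterion (isotropy surviving to order $3$) describes only a sub-branch of the pseudo-stable scenario, which the paper shows is empty for all three curvature types.

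The same mis-statement undermines your elimination of types N and DZ. ``The frame-fixing components break the residual isotropy, so the rank sequence stabilizes'' is not a valid inference: breaking the isotropy early does not cap the order. In the type N properly $\CH_1$ case the null-rotation freedom is likewise broken at first order, yet non-constant invariants $\nu$ and $\Delta\nu$ still appear at orders $2$ and $3$; the order is capped at $3$ only by the computed functional dependence $\d\nu\wedge\d\Delta\nu\wedge\d\Delta^2\nu=0$, obtained by applying the commutators \eqref{eq:commutator2}--\eqref{eq:commutator3} to $\nu$. An analogous horizontal-rank computation, not an isotropy argument, is what kills the remaining N and DZ sub-cases. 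Your type D program (derive a rigid normal form for the connection from the NP, Bianchi and commutator identities, then integrate the first structure equations to get coordinates $(x,u,w)$ with one free function $F(u)$ and constants $C,T$ with $C+2T^2\neq0$) has the right shape and is essentially what the paper does in Propositions \ref{prop:DCH1}--\ref{prop:DCH1q5} and Section \ref{sec:3Dmax}, but it only becomes tractable once the $\CH_1$ constraint $t_0=t_1=0$ is imposed, i.e., once all curvature components and all first-order torsion scalars are required to be constant; without that input the ``overdetermined system'' you allude to does not close, and the sufficiency half (that solutions exist and generically attain $q-1=4$) still needs the involutivity argument of Proposition \ref{prop:DCH1inv} or the explicit integration.
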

\noindent Note 1:
In the singular subcase of $T=0$, the expression
$(1-e^{4Tw})/(2T)$ should be interpreted in the limit sense as being
equal to $-2w$. \\
Note 2: the expression $\delta_C$ denotes 1 if $C=0$ and 0 if
$C\neq 0$.
\\
\noindent Note 3: curvature regularity
is a strengthening of the full-regularity assumption that we impose
in order to exclude ``type-changing'' metrics (see Definition
\ref{def:curvreg} below).

The structure of this paper is as follows. In Section
\ref{sec:curvnorm} we revise the relevant definitions and theorems
regarding curvature normalization, leading to Karlhede's bound
within his approach to the equivalence problem.
The concepts of curvature homogeneity and pseudo-stabilization turn
out to be the crucial ideas in the search for metrics of maximal
order. In particular, the maximal order metrics shown in
\eqref{eq:ds1} enjoy the $\CH_1$ (curvature homogeneous of order 1)
property. The relevant definitions are given in Section
\ref{sec:curvhom}.  We isolate the structure equations for the
maximal order metrics in Section \ref{sec:equivprob}. We then prove
the main Theorem \ref{thm:main} by integrating these equations in
Section \ref{sec:3Dmax}.
Relevant background material is put in four appendices: a
three-dimensional analogue of the Newman-Penrose formalism, the
transformation rules of connection and curvature variables under basic
Lorentz transformations, the Petrov-Penrose
classification of the three-dimensional Ricci tensor, and the
structure equations obeyed by a $\CH_1$ metric.

\section{Curvature normalization and Karlhede's
  bound}
\label{sec:curvnorm}
A general approach towards finding
metrics of maximal order was described in \cite{coldinv93} and
\cite{milpel09}.  The approach is based on two key ideas: (i)
curvature normalization, also known as the Karlhede
algorithm~\cite{karlhede80}, and (ii) curvature
homogeneity~\cite{singer}.
Normalization of the curvature tensor and its covariant derivatives,
also known as the Karlhede algorithm, splits the rank of the
classifying map $\hR^{(p)}$ into horizontal and vertical subranks and
thereby simplifies the equivalence problem. As was already mentioned,
the rank $r_p$ is the maximal number of functionally independent
component functions $(\hR_{abcd},\ldots ,\hR_{abcd;e_1\ldots e_p})$,
where the latter are functions of both position and frame
variables. In order to speak of horizontal rank, we need to assume
that the above tensors can be normalized.  The horizontal rank (see
Definition \ref{def:hrank} below) can then be defined as the the
maximal number of functionally independent component functions of
normalized curvature and its covariant derivatives.
\begin{definition}
  \label{def:curvreg}
  We say that a submanifold $S\subset \cR^{p}$ is a $p$th order
  \emph{normalizing cross-section} for $(M,g)$ provided:
\begin{itemize}
\item[(N1)] there exists a subgroup $G_p\subset O(\eta)$ that fixes $S$
  pointwise;
\item[(N2)] the normalization is maximal in the sense that
  $X(S)\cap S\neq \emptyset,\; X\in O(\eta)$ implies $X\in G_p$;
\item[(N3)] $(M,g)$ admits a cover by $\eta$-orthonormal moving
  coframes such that
  \[\img R^{(p)}\subset S,\quad \text{where }
  R^{(p)} = (R_{abcd},\ldots ,R_{abcd;e_1\ldots e_p})\] denotes the
  curvature components relative to the coframe in question.
\end{itemize}
If there exists a normalizing cross-section $S\subset \cR^{p}$ for
every $p=0,1,2,\ldots$ we say that $(M,g)$ is \emph{curvature
  regular}.
\end{definition}

Suppose that curvature regularity holds.  Normalizing $R^{(p)}$
reduces the structure group of the equivalence problem from $O(\eta)$
to $G_p$.  Because of N2, the maximally normalized components
$(R_{abcd}, \ldots, R_{abcd;e_1\ldots e_p})$ are locally defined
functions on the base $M$. These differential invariants, commonly
referred to as $p$th order \emph{Cartan invariants}, suffice to
invariantly classify $(M,g)$ and to solve the metric equivalence
problem \cite[Chapter 9]{ES}.

\begin{definition}
\label{def:hrank} Suppose that $(M,g)$ is curvature regular. We
define
\begin{align}
  \label{eq:nupdef}
  s_p &:= \dim G_p,\\
  \label{eq:taupdef}
   t_p & := \rank R^{(p)}
\end{align}
relative to some choice of normalizing cross-section. We refer to
$s_p$ as the $p$th order degree of frame freedom, and to $t_p$ as the
$p$th order horizontal rank.
\end{definition}
\begin{proposition}
  If $(M,g)$ is curvature regular, then $s_p,t_p$ do not vary
  with $x\in M$ and are independent of the choice of normalizing
  cross-section.  Furthermore,
  \begin{equation}
    \label{sp and tp variation}
    s_{p}\leq s_{p-1},\quad t_{p}\geq t_{p-1}
  \end{equation}
  and
  \begin{equation}\label{eq:rpsptp}
  r_p = t_p + n(n-1)/2-s_p,\quad p=0,1,2,\ldots.
  \end{equation}
\end{proposition}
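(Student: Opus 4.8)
The plan is to establish the rank identity \eqref{eq:rpsptp} first; the constancy and cross-section independence then fall out of it once curvature regularity is invoked, and the monotonicity \eqref{sp and tp variation} follows by comparing adjacent orders. The crucial preliminary observation is that (N1) and (N2) together compute the $O(\eta)$-stabilizer of every normalized value. Indeed, (N1) gives $G_p\subseteq\Stab(s)$ for each $s\in S$; conversely, if $X\in\Stab(s)$ then $s\in X(S)\cap S$, so (N2) forces $X\in G_p$. Hence $\Stab(s)=G_p$ for all $s\in S$, and the $O(\eta)$-orbit through any such $s$ has the constant dimension $n(n-1)/2-s_p$. The same computation shows that this orbit meets $S$ in the single point $s$, which---together with the transversality of the orbits to the cross-section---will give $T_s(O(\eta)\cdot s)\cap T_s S=\{0\}$.

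To derive \eqref{eq:rpsptp} I would fix $x\in M$ together with a coframe $u=\sigma(x)$ taken from the normalizing cover of (N3), so that $s:=\hR^{(p)}(u)\in S$, and split $T_u O(\eta,M)$ into the vertical subspace $\ker d\pi_u\cong\fo(\eta)$ and the horizontal subspace $d\sigma_x(T_xM)$. By $O(\eta)$-equivariance of $\hR^{(p)}$, its differential carries the vertical subspace onto $T_s(O(\eta)\cdot s)$, of dimension $n(n-1)/2-s_p$ (the kernel of this restriction being the Lie algebra of $G_p$, of dimension $s_p$), while it carries the horizontal subspace onto $T_s(\img R^{(p)})\subseteq T_s S$, of dimension $t_p$. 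By the transversality noted above these two images meet only in $0$, so $r_p=\rank\hR^{(p)}$ is exactly their sum, which is \eqref{eq:rpsptp}.

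The remaining assertions then follow formally. Since $S$, and with it $G_p$, is a single fixed object, $s_p=\dim G_p$ cannot depend on $x$; curvature regularity makes $r_p$ constant, and \eqref{eq:rpsptp} then forces $t_p$ to be constant as well. For independence of the cross-section, given a second normalizing cross-section $\tilde S$ with group $\tilde G_p$, I would choose coframes $u$ and $\tilde u=u\cdot g$ over a common point with $\hR^{(p)}(u)\in S$ and $\hR^{(p)}(\tilde u)=g^{-1}\cdot\hR^{(p)}(u)\in\tilde S$; the stabilizer identity then gives $\tilde G_p=g^{-1}G_p\,g$, so $s_p$ is unchanged, and since $r_p$ is intrinsic to $\hR^{(p)}$ the identity \eqref{eq:rpsptp} returns the same $t_p$. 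Finally, for \eqref{sp and tp variation} I would use compatible sections, choosing the order-$p$ cross-section to project, under the equivariant forgetful map $\cR^{p}\to\cR^{p-1}$, onto an order-$(p-1)$ cross-section; equivariance shows $G_p$ fixes the latter pointwise, so $G_p\subseteq G_{p-1}$ and $s_p\leq s_{p-1}$, while the normalized curvatures satisfy $R^{(p-1)}=\mathrm{pr}\circ R^{(p)}$, whence $t_{p-1}\leq t_p$ because post-composing with a linear projection cannot raise the rank.

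The step I expect to be the genuine obstacle is the transversality claim $T_s(O(\eta)\cdot s)\cap T_s S=\{0\}$ underlying the rank splitting. That the orbit meets $S$ only at $s$ is immediate from (N1)--(N2), but upgrading this to the infinitesimal statement is the geometric core of the normalization procedure, and is precisely where the regularity hypotheses---constancy of the orbit dimension and the fact that $S$ is a genuine transverse cross-section rather than merely a set meeting each orbit once---have to be brought to bear.
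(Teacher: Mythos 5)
The paper states this proposition without any proof (it is the standard rank-splitting fact of the Cartan--Karlhede theory, implicitly deferred to Karlhede's paper and to Section 9.2 of Stephani et al.), so there is no in-paper argument to compare yours against; your outline is the standard one and most of it is sound. The stabilizer computation $\Stab(s)=G_p$ from (N1)--(N2), the vertical/horizontal splitting of $T_uO(\eta,M)$ using a section from the (N3) cover, the conjugation argument for independence of the cross-section, and the projection argument for the monotonicity \eqref{sp and tp variation} are all correct. One minor caveat: you derive constancy of $t_p$ from constancy of $r_p$ and attribute the latter to curvature regularity, but constancy of $r_p$ is the separate \emph{full regularity} hypothesis --- a standing assumption in the paper, not part of Definition \ref{def:curvreg} --- so the proposition as literally stated needs that hypothesis imported (or $t_p$ must be handled directly).

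The genuine gap is the one you yourself flag: the transversality claim $T_s(O(\eta)\cdot s)\cap T_sS=\{0\}$, on which the identity \eqref{eq:rpsptp} rests, does not follow from (N1)--(N3) as stated, and your proof leaves it open rather than closing it. Single-point intersection of each orbit with $S$, together with constancy of the stabilizer along $S$, is strictly weaker than transversality. A concrete obstruction: for the translation action of $\Rset$ on $\Rset^2$ by $(x,y)\mapsto(x+a,y)$, whose orbits are the horizontal lines, the submanifold $S=\{(t,t^3)\}$ meets every orbit in exactly one point, has trivial (hence constant) stabilizers, and satisfies (N1) and (N2) verbatim, yet at the origin $T_sS$ \emph{equals} the tangent to the orbit, so the vertical and horizontal images would overlap and the rank would not split additively there. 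In the intended applications the cross-sections are affine (curvature components set to constants), so transversality holds by construction, and Olver's notion of cross-section builds transversality into the definition; to make your proof complete you must either strengthen the definition of normalizing cross-section to include transversality to the orbits, or verify it explicitly for the cross-sections actually employed. As written, this step is asserted, not proved.
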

\begin{theorem}[Karlhede, Theorem 4.1 of \cite{karlhede80}, see also
  Section 9.2 of \cite{ES}]
  \label{thm:karlhede}
  Let $(M,g)$ be a fully regular, curvature regular $n$-dimensional
  pseudo-Riemannian manifold with isometry group $K$. Let $r_p, t_p,
  s_p$ be as defined above, and let $q$ be the smallest integer such
  that $r_{q-1} = r_q$.  Then, $q$ is also the smallest integer such
  that $s_{q-1} = s_q$ and $t_{q-1} = t_q$. Furthermore, we have that
   $G_{q-1}\subset O(\eta)$ is isomorphic to the isotropy subgroups
  $K_x\subset O(T_xM),\, x\in M$; that
  \begin{equation}\label{eq:dimK}
  \dim K = n-t_{q}+s_{q},
  \end{equation}
  and that $n-t_{q}$ is equal to the dimension of the $K$-orbits.
\end{theorem}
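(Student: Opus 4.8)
The plan is to derive the numerical assertions from the rank-splitting identity \eqref{eq:rpsptp} together with the monotonicity \eqref{sp and tp variation}, and then to convert the resulting algebraic statements into geometric ones by invoking the fundamental theorem of the equivalence method. First I would dispose of the claim about $q$. Subtracting consecutive instances of \eqref{eq:rpsptp} gives
\[ r_p - r_{p-1} = (t_p - t_{p-1}) + (s_{p-1} - s_p), \]
and by \eqref{sp and tp variation} each summand is nonnegative. Hence, for every fixed $p$, the equality $r_{p-1} = r_p$ holds if and only if $t_{p-1} = t_p$ and $s_{p-1} = s_p$ hold simultaneously. The two monotone sequences therefore become stationary at exactly the same step, so the smallest $p$ at which $r$ fails to increase coincides with the smallest $p$ at which both $t$ and $s$ stabilize; in particular $s_{q-1} = s_q$ and $t_{q-1} = t_q$.

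Next I would record that this stabilization persists, i.e.\ $r_p = r_q$, and hence $s_p = s_q$ and $t_p = t_q$, for all $p \geq q$. This is the substance of the classification result \cite[Theorem 12.11]{olver} already quoted above: once $r_{q-1} = r_q$, the $q$th order covariant derivatives are functionally dependent on the lower-order data, and a prolongation (differentiation) argument propagates this dependence to all higher orders. Consequently the reduced structure group $G_{q-1} = G_q = \cdots$ is the final frame freedom, and $R^{(q)}$ provides a complete set of Cartan invariants on $M$.

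The geometric core of the argument is the identification $G_{q-1} \cong K_x$. Fixing $x \in M$ and a moving coframe for which $\img R^{(q)} \subset S$, any isometry $\phi$ fixing $x$ preserves $g$ and therefore the entire curvature jet, so its linearization $d\phi_x \in O(T_xM) \cong O(\eta)$ stabilizes the normalized data and lies in $G_{q-1}$; this produces an injective homomorphism $K_x \to G_{q-1}$. The converse—that every $X \in G_{q-1}$ is the linearization of a genuine local isometry—is the step I expect to be the main obstacle. Here one uses that $X$ preserves $R^{(q)}$, hence by the persistence above the whole curvature jet, so that the Cartan structure equations for the coframe are formally $X$-invariant; applying the fundamental theorem of the equivalence method, equivalently integrating the resulting closed (involutive) differential system by Frobenius/Cartan--K\"ahler \cite[Chapter 9]{ES}, yields the desired isometry. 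This establishes the isomorphism $G_{q-1} \cong K_x \subset O(T_xM)$.

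Finally the dimension formulae follow from the orbit--stabilizer relation $\dim K = \dim (K\cdot x) + \dim K_x$. By the classification theorem the $K$-orbits are locally the common level sets of the $t_q$ functionally independent Cartan invariants among the components of $R^{(q)}$, so $\dim(K\cdot x) = n - t_q$, which is the asserted orbit dimension; and $\dim K_x = \dim G_{q-1} = s_{q-1} = s_q$ by the first step. Combining these gives $\dim K = n - t_q + s_q$, completing the proof.
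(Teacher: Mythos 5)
The paper offers no proof of this theorem: it is quoted as a known result of Karlhede \cite{karlhede80} (see also \cite[Section 9.2]{ES}), so there is no internal argument to compare yours against line by line. Your reconstruction is nonetheless sound in outline and follows the standard route. The numerical part is exactly right, and is in fact the computation the paper performs implicitly just after the theorem when deriving \eqref{eq:kbound}: differencing \eqref{eq:rpsptp} and using the monotonicity \eqref{sp and tp variation} shows $r_p-r_{p-1}$ is a sum of two nonnegative increments, so $r$, $t$ and $s$ stabilize at the same step. The genuinely hard content --- that every $X\in G_{q-1}$ integrates to a local isometry fixing $x$, and that points with equal Cartan invariants lie on a common local $K$-orbit --- you correctly isolate and defer to the fundamental theorem of the equivalence method; note that \emph{both} halves of your orbit--stabilizer count ($\dim K_x=s_q$ and $\dim(K\cdot x)=n-t_q$) rest on that same integration theorem, so what you have is an honest reduction to \cite[Theorem 12.11]{olver} and \cite[Chapter 9]{ES} rather than a self-contained proof, which is appropriate for a cited result. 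One point worth making explicit: the equivalence theorem naturally produces isometries from automorphisms of the full $q$th-order data, i.e.\ from elements of $G_q$, whereas the statement concerns $G_{q-1}$; the equality $s_{q-1}=s_q$ only gives these groups equal dimension a priori, and the identification $G_{q-1}=G_q$ (including possible discrete components) is itself part of the persistence claim you invoke, so it deserves a sentence rather than being absorbed silently into ``$G_{q-1}=G_q=\cdots$''.
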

\noindent
In particular, \eqref{eq:rpsptp} implies that
\begin{equation}
  \label{eq:r0s0}
   r_0 \geq n(n-1)/2 - s_0.
\end{equation}
By the regularity assumption,
\[ r_0 + p\leq r_p \leq n(n+1)/2 ,\quad 0\leq p\leq q-1.\] Applying
the above inequality with $p=q-1$ and using \eqref{eq:r0s0} gives the
Karlhede bound \eqref{eq:kbound} as an immediate corollary.



\section{Curvature homogeneity and  pseudo-stabilization}
\label{sec:curvhom}

Suppose that $(M,g)$ is fully regular and curvature regular.  The
curvature-homogeneity  condition admits several equivalent
definitions~\cite{boeckx,gilkey}, but with the above assumptions,
the following definition is the most convenient.
\begin{definition}\label{def:curv-hom}
  A manifold $(M,g)$ is {\em curvature-homogeneous} of order $k$, or
  $\CH_k$ for short, if it is curvature regular and if the horizontal
  rank $t_k=0$.
 If $t_k=0$ and $t_{k+1}>0$, we say that $(M,g)$ is
  \emph{properly} $\CH_k$.
\end{definition}
\noindent
To put it another way, a properly curvature
homogeneous manifold of order $k$ has constant Cartan invariants of
order $\leq k$, with a non-constant invariant appearing at order
$k+1$.
The main application of the curvature homogeneous concept was the
following theorem \cite{singer}.
\begin{theorem}[Singer]
  A manifold $(M,g)$ is locally homogeneous if and only if it is
  $\CH_k$ for all $k=0,1,2,\ldots$.
\end{theorem}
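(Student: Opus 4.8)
The plan is to establish the two implications separately, with the substance of the backward direction resting on Karlhede's Theorem~\ref{thm:karlhede}. Throughout I may use that $(M,g)$ is fully regular and curvature regular, since these are the standing hypotheses of this section and curvature regularity is built into the $\CH_k$ definition.

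For the forward direction, suppose $(M,g)$ is locally homogeneous. A local isometry $\phi$ lifts to an $O(\eta)$-equivariant diffeomorphism of orthonormal coframe bundles that intertwines the canonical maps $\hR^{(p)}$; consequently it carries normalized curvature data to normalized curvature data and preserves the Cartan invariants. These invariants are therefore constant along the orbits of the local isometry pseudogroup, and local homogeneity forces each such orbit to be a neighbourhood of its base point. Hence every normalized component $R_{abcd},\ldots,R_{abcd;e_1\ldots e_k}$ is locally constant, so by Definition~\ref{def:hrank} its functional rank vanishes: $t_k=\rank R^{(k)}=0$ for every $k$. Thus $(M,g)$ is $\CH_k$ for all $k$.

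For the backward direction, suppose $t_k=0$ for all $k\geq 0$. By the preceding Proposition the sequence $s_0\geq s_1\geq s_2\geq\cdots$ is non-increasing and bounded below by $0$ (using \eqref{sp and tp variation}), so it stabilizes, and there is a smallest integer $q$ with $s_{q-1}=s_q$. Since $t_p\equiv 0$, relation \eqref{eq:rpsptp} reads $r_p=n(n-1)/2-s_p$, whence this same $q$ is the smallest integer with $r_{q-1}=r_q$, and Karlhede's Theorem~\ref{thm:karlhede} applies. Its orbit-dimension statement asserts that $n-t_q$ equals the dimension of the orbits of the isometry group $K$; as $t_q=0$, these orbits are $n$-dimensional, hence open in $M$. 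Therefore the local isometry pseudogroup acts locally transitively, which is exactly local homogeneity. (The companion formula \eqref{eq:dimK} then records $\dim K=n+s_q$.)

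I expect the only genuine content to lie in this backward direction, and specifically in the passage from \emph{all Cartan invariants are constant} to \emph{local isometries realize transitivity}. That inference is precisely what Karlhede's Theorem~\ref{thm:karlhede} encapsulates through its orbit-dimension claim, so the real work is to verify that the stabilization of $s_p$ places us in the scope of that theorem; the frame-bundle bookkeeping of the forward direction and the monotonicity argument for $s_p$ are routine by comparison.
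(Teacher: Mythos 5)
Your proposal is correct and follows the same route as the paper, which offers no detailed argument but states that Singer's theorem ``is an immediate corollary of Theorem~\ref{thm:karlhede}''; your backward direction via the orbit-dimension claim $n-t_q$ and your forward direction via invariance of Cartan invariants under local isometries are exactly the details that remark leaves implicit.
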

\noindent In other words, a locally homogeneous space is
characterized by the property of having constant Cartan invariants.
As such, Singer's theorem is an immediate corollary of Theorem
\ref{thm:karlhede}.

In this paper we are interested in curvature homogeneity
for a different, but related reason. As was shown in
\cite{milpel08}, curvature homogeneity is also a key concept in the
search for maximal order metrics.
The relevant observation is that for a
$\CH_k$ geometry the rank $r_k$ is small because $t_k=0$,
and this is exactly what is needed for maximal
order.
Let us explain further in the context of 3-dimensional Lorentzian
metrics.


\begin{definition}\label{def:pseudo-stab}
  We say that a curvature regular geometry has $k$th order
  pseudo-stabilization provided $s_k=s_{k-1}>s_q$.
\end{definition}
\noindent Our notion of pseudo-stabilization is different but
conceptually related to the notion employed in \cite[Theorem
5.37]{olver}. Notice that a $k$th order pseudo-stable geometry has
$t_k> t_{k-1}$ by theorem \ref{thm:karlhede}.


\begin{proposition}
  \label{prop:pseudostab}
  A 4th order, 3-dimensional, Lorentz geometry, if one exists, is
  either properly $\CH_1$ or is properly $\CH_0$  with 1st order
  pseudo-stabilization.
\end{proposition}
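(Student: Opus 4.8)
The plan is to reduce the statement to a purely numerical analysis of the integer sequences $r_p, s_p, t_p$, using the monotonicity relations $s_p \leq s_{p-1}$, $t_p \geq t_{p-1}$ and the identity $r_p = t_p + n(n-1)/2 - s_p$ already recorded, supplemented by two dimension-specific bounds. Throughout I set $n=3$, so $n(n-1)/2 = 3$ and $n(n+1)/2 = 6$, and I assume a 4th order geometry exists, i.e. $q = 5$.

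First I would pin down $s_0$. The Karlhede bound \eqref{eq:kbound} reads $q \leq n + s_0 + 1 = 4 + s_0$, and in the three-dimensional Lorentzian setting $s_0 \leq 1$ (the automorphism algebra of the Ricci operator is at most one-dimensional). Since $q = 5$, these force $s_0 = 1$; as $s_p$ is non-increasing and non-negative, every $s_p$ then lies in $\{0,1\}$. Next I would determine the entire sequence $r_p$. Minimality of $q$ gives the strict increase $r_{p-1} < r_p$ for $1 \leq p \leq q-1$, whence $r_p \geq r_0 + p$; combined with the ceiling $r_p \leq 6$ and the floor $r_0 \geq n(n-1)/2 - s_0 = 2$ from \eqref{eq:r0s0}, evaluating at $p = 4$ squeezes $r_0 = 2$ and forces $r_p = p + 2$ for $0 \leq p \leq 4$. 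Feeding this into $r_p = t_p + 3 - s_p$ yields $t_p - s_p = p - 1$; in particular $t_0 = s_0 - 1 = 0$, so the metric is automatically $\CH_0$ in the sense of Definition \ref{def:curv-hom}.

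The decisive extra ingredient is that, by condition (N2), the maximally normalized curvature components are honest functions on the three-dimensional base $M$, so the horizontal rank obeys $t_p \leq n = 3$. At $p = 4$ this gives $3 = t_4 = s_4 + 3$, hence $s_4 = 0$, and then $0 \leq s_5 \leq s_4 = 0$ forces $s_5 = 0$. The argument then closes with a dichotomy on $s_1 \in \{0,1\}$. If $s_1 = 0$, then $t_1 = s_1 + 0 = 0$ while $t_2 = s_2 + 1 \geq 1 > 0$, so the geometry is properly $\CH_1$. If $s_1 = 1$, then $t_1 = 1 > 0$, so together with $t_0 = 0$ the geometry is properly $\CH_0$; moreover $s_1 = s_0 = 1 > 0 = s_5 = s_q$, which is precisely 1st order pseudo-stabilization in the sense of Definition \ref{def:pseudo-stab}.

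The only genuinely nontrivial inputs—and the steps I would flag as the crux—are the two inequalities $s_0 \leq 1$ and $t_p \leq n$: the first rests on the classification of the three-dimensional Ricci operator summarized in the appendices, while the second is the observation that normalized Cartan invariants descend to functions on an $n$-dimensional base. Once these are in place, every remaining equality is forced arithmetic, and no integration or frame computation is needed for this proposition.
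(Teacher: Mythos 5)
Your argument is correct and follows essentially the same route as the paper's: both reduce the proposition to arithmetic on the sequences $r_p,s_p,t_p$ via $s_0\le 1$, $r_0\ge 2$, the ceiling $r_p\le n(n+1)/2=6$ and the identity \eqref{eq:rpsptp}, with your write-up merely making explicit the bound $t_p\le n=3$ (forcing $s_4=s_5=0$) that the paper leaves implicit when it asserts $s_q=0$ for the pseudo-stable sequences. One small caveat: $s_0\le 1$ is not because the automorphism algebra of \emph{every} Ricci operator is at most one-dimensional (type O has $s_0=3$), but because type O is excluded by Schur's theorem (Proposition \ref{prop:schur}) --- harmless here, since a 4th order metric cannot be locally homogeneous.
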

\begin{proof}
  Table \ref{tab:petrovtype} reveals $s_0 \leq 1$ for a
  non-homogeneous geometry (see Proposition \ref{prop:schur} below).
  Hence, $r_0\geq 2$, and hence a 4th order geometry requires the
  following rank sequence:
  \[ (r_p) = (2,3,4,5,6,6).\]
  This can be achieved in essentially two ways: either by
  \begin{equation}
    \label{eq:CH1maxorder}
    (t_p) = (0,0,1,2,3,3),\quad (s_p) = (1,0,0,0,0,0),
  \end{equation}
  which describes a properly $\CH_1$ geometry, or by by three possible sequences
with $s_q = 0$ and starting with
  \begin{equation}
    \label{eq:CH0ps1maxorder}
     (t_p) = (0,1,\ldots),\quad (s_p) = (1,1,\ldots),
  \end{equation}
  which describes a properly $\CH_0$ geometry with 1st order
  pseudo-stabilization.
\end{proof}
In the following section, we rule out the pseudo-stabilization and
type DZ, N scenarios and show that a 4th order requires a type D,
properly $\CH_1$ geometry.  We then explicitly write down the
necessary structure equations and integrate them. The end result is
Theorem \ref{thm:main}.


\section{The equivalence problem}
\label{sec:equivprob}
In this section we derive the necessary and sufficient conditions for
a 4th order metric.  Table \ref{tab:petrovtype} of the appendix shows
that $s_0 >0$ for curvature types O, N, DZ, and D.  Type O can be
ruled out by Schur's theorem.  A proof can be found in
\cite[Cor. 2.2.5 and 2.2.7]{wolf}.
\begin{proposition}
  \label{prop:schur}
  If the curvature is type O at all points $x\in M$, then $M$ is a
  locally homogeneous space, i.e. $t_p=0$ for all $p$.
\end{proposition}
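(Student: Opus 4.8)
The plan is to identify the algebraic meaning of type O and then invoke a Schur-type argument. In three dimensions the Weyl tensor vanishes, so the Riemann tensor is reconstructed from the Ricci tensor $R_{ab}$ and scalar curvature $R=R^a{}_a$ via
\[
  R_{abcd}= g_{ac}R_{bd}-g_{ad}R_{bc}+g_{bd}R_{ac}-g_{bc}R_{ad}
  -\tfrac{R}{2}\,(g_{ac}g_{bd}-g_{ad}g_{bc}).
\]
Type O is precisely the case in which the trace-free Ricci operator vanishes, i.e.\ $R_{ab}=\tfrac{R}{3}g_{ab}$ at every point of $M$. Substituting this into the display collapses it to the constant-curvature form $R_{abcd}=\tfrac{R}{6}(g_{ac}g_{bd}-g_{ad}g_{bc})$, with $R$ a priori a function on $M$.

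The key step is to upgrade $R$ from a function to a constant by the contracted Bianchi identity $\nabla^a R_{ab}=\tfrac{1}{2}\nabla_b R$. Feeding in $R_{ab}=\tfrac{R}{3}g_{ab}$ gives $\tfrac{1}{3}\nabla_b R=\tfrac{1}{2}\nabla_b R$, so $-\tfrac{1}{6}\nabla_b R=0$ and hence $R$ is locally constant on the connected manifold $M$; this is exactly where the dimension $n=3\ge 3$ is used (the coefficient $\tfrac{1}{n}-\tfrac{1}{2}$ is nonzero). With $R$ constant, $R_{abcd}=\tfrac{R}{6}(g_{ac}g_{bd}-g_{ad}g_{bc})$ is an algebraic expression in the metric with constant coefficient, so $\nabla_e R_{abcd}=\tfrac{R}{6}\,\nabla_e(g_{ac}g_{bd}-g_{ad}g_{bc})=0$ by metric compatibility, and likewise every higher covariant derivative vanishes identically.

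Finally I translate this into the horizontal-rank language of Definition \ref{def:hrank}. Relative to any $\eta$-orthonormal moving coframe the lifted components are $\hR_{abcd}=\tfrac{R}{6}(\eta_{ac}\eta_{bd}-\eta_{ad}\eta_{bc})$, which are constants, while all derivative components $\hR_{abcd;e_1\ldots e_p}$ vanish. Thus the normalized map $R^{(p)}$ has only constant, frame-independent entries, so its rank as a collection of functions on $M$ is zero: $t_p=0$ for all $p$. By Singer's theorem this is exactly the statement that $M$ is locally homogeneous. The only real subtlety is the Schur step --- verifying that pointwise proportionality of Ricci to the metric forces the scalar curvature to be constant in dimension at least three --- since everything downstream is an immediate consequence of $R$ being constant.
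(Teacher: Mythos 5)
Your proof is correct and is precisely the classical Schur argument that the paper does not write out but instead outsources to its citation of Wolf (Cor.\ 2.2.5 and 2.2.7): type O forces $S_{ab}=0$, the contracted Bianchi identity then makes $R$ constant since $n=3\geq 3$, and the resulting constant-curvature form of $R_{abcd}$ has vanishing covariant derivatives of all orders, giving $t_p=0$ for every $p$. No gaps.
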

\noindent
We are
left with the following possibilities.
\begin{proposition}
  \label{prop:NDZD}
  A 4th order metric, if one exists, requires curvature of type N, DZ,
  or D.
\end{proposition}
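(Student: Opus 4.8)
The plan is to combine Karlhede's bound with the curvature-type table of the appendix. The hypothesis that the metric is 4th order means precisely that its order $q-1=4$, equivalently $q=5$. Specializing Karlhede's bound \eqref{eq:kbound} to $n=3$ gives $q\leq 4+s_0$, so the standing assumption $q=5$ forces $s_0\geq 1$; that is, the curvature must admit a nontrivial continuous symmetry group. This single inequality already excludes the generic case, since type I carries $s_0=0$.

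Next I would read off from Table \ref{tab:petrovtype} exactly which Petrov--Penrose types satisfy $s_0>0$, namely O, N, DZ, and D, so the list of candidates is immediately narrowed to these four. It then remains to eliminate type O, and here Proposition \ref{prop:schur} (Schur's theorem) applies directly: if the curvature is type O at every $x\in M$, then $(M,g)$ is locally homogeneous and hence $t_p=0$ for all $p$. By \eqref{eq:rpsptp} this yields $r_p=n(n-1)/2-s_p$, and since a type O curvature is fixed by the full frame group one has $s_0=n(n-1)/2$, so $r_0=0$ and the rank sequence never grows. This is incompatible with $q=5$; equivalently, by Proposition \ref{prop:pseudostab} a 4th order geometry is either properly $\CH_1$ or properly $\CH_0$ with pseudo-stabilization, and in both cases $t_p$ is eventually positive, contradicting $t_p\equiv 0$. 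Thus only types N, DZ, and D survive.

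The argument is therefore a direct assembly of three facts already in hand: the numerical Karlhede bound, the classification of $s_0$ by curvature type, and Schur's theorem. I do not anticipate a genuine obstacle in the deduction itself, as the proposition is essentially a corollary; the only step requiring care is the exclusion of type O, where one must ensure that the full isotropy of a type O tensor forces $t_p\equiv 0$ rather than merely $t_0=0$ — but this is exactly the content of Proposition \ref{prop:schur}. The substantive task of ruling out types N and DZ, and of isolating the surviving type D metrics, is reserved for the analysis that follows.
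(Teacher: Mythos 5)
Your proposal is correct and follows essentially the same route as the paper: the paper likewise deduces $s_0>0$ from the required rank sequence (equivalently, Karlhede's bound with $n=3$), reads off from Table \ref{tab:petrovtype} that only types O, N, DZ, D have $s_0>0$, and discards type O via Schur's theorem (Proposition \ref{prop:schur}). Your extra remark that type O forces $r_p\equiv 0$, hence is incompatible with $q=5$, is a correct spelling-out of a step the paper leaves implicit.
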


According to Proposition \ref{prop:pseudostab}, each of the above 3
cases further splits into two subcases, according to whether the
geometry is properly $\CH_1$ or properly $\CH_0$ with 1st order
pseudo-stabilization.  We consider the above possibilities in turn.
Five of the possibilities can be ruled out, and this leaves a unique
configuration for a 4th order metric.

Since in a $\CH_0$ geometry the 0th order components $R_{abcd}$ are
constant, the 1st order components $R_{abcd;e}$ are quadratic
expressions of certain spin coefficients.  Therefore, in the
analysis that follows it is more convenient to specify the Cartan
invariants in terms of spin coefficients and their frame
derivatives.  This methodology for constructing invariants is
related to the notion of essential torsion in the Cartan equivalence
method. The relevant details and definitions are given in Appendix
\ref{sec:ch1}.


\subsection{Type N configurations.}\label{subsec:typeN}
Taking the curvature canonical form of Table \ref{tab:petrovtype} for
this case, and assuming the $\CH_0$ property, we have
\begin{equation}\label{eq:Nconfig0}
  \Psi_0=\Psi_1=\Psi_2 = \Psi_3=0,\quad \Psi_4=\pm 1,\quad
  R=\tilde{R},
 \end{equation}
 where $\tilde{R}$ is a real constant.
The group $G_0$ preserving \eqref{eq:Nconfig0} is generated by null
rotations \eqref{eq:nullrot} about $\bell$ and the reflections
\eqref{eq:reflt_def}, \eqref{eq:reflm_def}.
The type N 1st order torsion matrix (see Appendix \ref{sec:ch1} for
the derivation) is
\begin{equation}\label{type N Gamma matrix} (\Gamma^\rho{}_a) =
  \begin{pmatrix}
    \kappa & \sigma & \tau \\
    \epsilon & \alpha & \gamma
  \end{pmatrix}.
\end{equation}
Substituting \eqref{eq:Nconfig0} into the Bianchi equations
\eqref{eq:bianchi1}-\eqref{eq:bianchi3} yields the relations
\begin{equation}
  \label{eq:Nconfig1}
  \kappa=0,\quad \sigma=2\epsilon.
\end{equation}
We now consider the $\CH_1$ and pseudo-stable cases in turn.

\begin{proposition}
  \label{prop:NCH1}
  A type N, properly $\CH_1$ geometry has order bounded by $q-1 \leq
  3$.
\end{proposition}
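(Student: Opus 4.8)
The plan is to show that a type~N, properly $\CH_1$ geometry cannot sustain the rank sequence $(r_p)=(2,3,4,5,6,6)$ required for fourth order, by demonstrating that the horizontal rank stabilizes too early. Concretely, I would establish that such a geometry has $t_3=t_2$, so that $q\le 4$ and hence $q-1\le 3$. The starting point is the normalized type~N configuration \eqref{eq:Nconfig0} together with the Bianchi constraints \eqref{eq:Nconfig1}, namely $\kappa=0$ and $\sigma=2\epsilon$, and the torsion matrix \eqref{type N Gamma matrix} whose entries are the candidate Cartan invariants at first order.

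First I would record the residual frame freedom: after fixing \eqref{eq:Nconfig0}, the group $G_0$ consists of null rotations about $\bell$ together with the two reflections, so $s_0=1$. The $\CH_1$ assumption forces $t_1=0$, which means every first-order spin coefficient in \eqref{type N Gamma matrix} must either vanish or be normalizable to a constant by exploiting the one-parameter null-rotation freedom. I would use the transformation rules from the appendix on basic Lorentz transformations to see how $\tau,\alpha,\gamma,\epsilon$ transform under a null rotation, identify which combination is gauge-fixed to a constant, and thereby pin down $G_1$, expecting $s_1=0$ as in the sequence \eqref{eq:CH1maxorder}. This is the bookkeeping step that sets up the genuinely dynamical part of the argument.

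The core of the proof is then to compute the second- and third-order essential torsion and show that no new functionally independent invariant survives at third order. Since $t_2=2$ and $t_3=3$ would be needed for maximal order, I would express the second-order Cartan invariants as frame derivatives of the now-constant first-order invariants, applying the commutator (structure) equations and the remaining Bianchi and Ricci identities of Appendix~\ref{sec:ch1} to reduce the number of independent quantities. The key mechanism is that, once the null-rotation freedom is spent, the constancy of the normalized first-order data propagates strong integrability constraints; I expect these to force a functional relation among the third-order derivatives, collapsing $t_3$ back to $t_2$.

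The main obstacle will be controlling the combinatorial growth of the derived spin-coefficient relations at second and third order and verifying that the integrability conditions genuinely close off, rather than merely postponing, the appearance of a new invariant. In particular, one must be careful that the apparent freedom in the remaining coefficients $\tau,\alpha,\gamma,\epsilon$ does not secretly encode a non-constant invariant at third order that I have prematurely absorbed into the frame. The decisive check is to confirm that the system of structure equations admits only solutions with $t_3=t_2$, which rules out the type~N $\CH_1$ branch and yields $q-1\le 3$.
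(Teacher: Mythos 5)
There is a genuine gap in your plan: the concrete target you set yourself, namely $t_3=t_2$, is false for this class of geometries, and if it were true it would prove too much. Following the paper's normalization (which your setup correctly anticipates: $s_1=0$, the combination $2\alpha+\tau$ gauge-fixed, $\gamma$ normalized to zero, $\pi=-\tau$ and $\lambda=0$ forced by the structure equations), the surviving non-constant torsion is the single scalar $\nu$, so $t_2=1$. At third order the structure equations give $D\nu=0$ and determine $\delta\nu$ algebraically in terms of $\nu$, but they leave $\Delta\nu$ free; generically $\Delta\nu$ is functionally independent of $\nu$, so $t_3=2>t_2$. Indeed the paper's conclusion section points out that type N, properly $\CH_1$ geometries of order exactly $3$ exist, which is precisely the statement that $r_3>r_2$; your proposed stabilization at third order cannot happen, and had it happened you would have concluded $q-1\le 2$, not $q-1\le 3$.

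The mechanism that actually caps the order lies one step further on, and it is a specific structural fact rather than a generic ``integrability conditions close off'' argument: every generator of the invariant tower is annihilated by $D$. The commutator relations give $D\nu=0$, $D\Delta\nu=0$ and $D\Delta^2\nu=0$, while $\delta\nu$ and $\delta\Delta\nu$ are algebraic in lower-order invariants, so the only candidates for new invariants at each order are $\Delta\nu,\Delta^2\nu,\ldots$. Since $\d\nu$, $\d\Delta\nu$ and $\d\Delta^2\nu$ all lie in the $2$-dimensional annihilator of $\bell$, their triple wedge product vanishes, forcing $t_4=t_3=2$ and hence $q-1=3$. Your write-up does not identify this $D$-annihilation mechanism, and without it the ``decisive check'' you describe would come out the wrong way at third order. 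To repair the proof, keep your setup but push the computation to fourth order and use the vanishing of the $D$-derivatives to bound the number of functionally independent invariants by $2$.
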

\begin{proof}
  By assumption, after the first-order torsion is normalized,
  $\epsilon,\tau,\alpha,\gamma$ are constants.
  Hence, by \eqref{eq:NP1} - \eqref{eq:NP5}
  \begin{eqnarray}
    &&\label{eq:Nconfig2}
    \sigma = \epsilon = 0,\quad R = -12 \tau^2,\\
    &&\label{eq:Nconfig3}
    (\tau+\pi)(2\alpha+\tau) =0.
  \end{eqnarray}
  By \eqref{eq:nr1}-\eqref{eq:nr6}, $\tau$ and $\alpha$ are invariant
  under any null rotation about $\bell$, while $\gamma$ transforms like
  \begin{equation}
    \label{eq:Nconfig4}
    \gamma' = \gamma+x(2\alpha+\tau).
  \end{equation}
  Since $t_0=t_1=0$ and $s_0=1$ by assumption, $s_1=1$ would lead to $q-1=0$. Thus we assume $s_1=0$ henceforth. By \eqref{eq:Nconfig4} this entails $2\alpha+\tau\neq 0$, and hence
  \begin{equation}
    \label{eq:Nconfig5}
    \pi = -\tau
  \end{equation}
  by \eqref{eq:Nconfig3}.  We impose the normalizations
  \begin{equation}
    \label{eq:Nconfig6}
    \gamma = 0,\quad 2\alpha+\tau>0,
  \end{equation}
  which leaves $G_1$ as the discrete group generated by \eqref{eq:reflt_def}.
  Then equation \eqref{eq:NP7} implies that
  \begin{equation}
    \label{eq:Nconfig7}
    \lambda =0.
  \end{equation}
From \eqref{eq:Nconfig5}, \eqref{eq:Nconfig7} and $s_1=0$ it follows
that the 2nd order invariants are generated by $\nu$. If $\nu$ is
constant then $t_2=0$ and $q-1=1$. Thus we assume henceforth that
$\nu$ is non-constant, i.e.\ $t_2=1$.
The remaining structure equations \eqref{eq:NP8} and \eqref{eq:NP9}
reduce to
  \begin{equation}
    \label{eq:Nconfig8}
    D\nu = 0,\quad \delta\nu= 2\nu(\tilde\tau-2\tilde\alpha)\pm 1/2,
  \end{equation}
  where $\tilde\tau, \tilde\alpha$ are constants such that
  $2\tilde\alpha+\tilde\tau> 0$.
  Suppose then
  that $\Delta\nu$ is functionally independent from $\nu$, and hence
  that $t_3=2$ (else $t_3=1$ and $q-1=2$). By the (N2) curvature
  regularity assumption we have $\Delta\nu \neq 0$ at each point and we fully fix the frame by normalizing
  \[ \Delta \nu > 0.\]
  Now the 3rd order invariants are generated by $\nu, \Delta \nu$.
  Applying \eqref{eq:commutator2} and \eqref{eq:commutator3} to $\nu$
  gives
  \begin{equation}\label{eq:Nconfig8b}
    D\Delta\nu = 0,\quad \delta\Delta\nu = 3 (\tilde\tau-2\tilde\alpha)
    \Delta \nu.
  \end{equation}
  Hence, the 4th order invariants are generated by
  $\nu,\Delta\nu$ and $\Delta^2\nu$. Applying \eqref{eq:commutator2}
  to $\Delta\nu$ gives
  \[D\Delta^2\nu = 0,\] and hence
  \[ \d\nu \wedge \d\Delta\nu \wedge \d \Delta^2\nu = 0.\]
  Therefore $t_4=2$, which implies that the order is $q-1=3$.
\end{proof}



\begin{proposition}
  \label{prop:NCH0}
  The order of a type N, $\CH_0$, pseudo-stable geometry is bounded by
  $q-1\leq 3$.
\end{proposition}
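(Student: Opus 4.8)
The plan is to follow the same template that succeeded in Proposition \ref{prop:NCH1}, but now starting from the pseudo-stable branch \eqref{eq:CH0ps1maxorder} rather than the $\CH_1$ branch. By Proposition \ref{prop:pseudostab}, a $\CH_0$ pseudo-stable geometry has the rank data $s_0=s_1=1$, $t_0=0$, $t_1\geq 1$, so exactly one first-order invariant survives normalization while a one-parameter subgroup of $G_0$ is preserved. The starting point is the type N configuration \eqref{eq:Nconfig0}, with the torsion matrix \eqref{type N Gamma matrix} subject to the Bianchi constraints \eqref{eq:Nconfig1}, namely $\kappa=0$ and $\sigma=2\epsilon$. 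The key structural fact I would extract first is which of the spin coefficients $\{\epsilon,\tau,\alpha,\gamma\}$ becomes the single essential first-order invariant and which residual null rotation about $\bell$ is left unfixed by the pseudo-stabilization.

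First I would analyze the action of the residual group $G_0$ (null rotations about $\bell$ together with the reflections) on the first-order torsion, using the transformation rules \eqref{eq:nr1}-\eqref{eq:nr6}. Since $\tau$ and $\alpha$ are null-rotation invariant while $\gamma$ shifts by $x(2\alpha+\tau)$ as in \eqref{eq:Nconfig4}, pseudo-stabilization $s_1=1$ forces the orbit direction to degenerate; concretely I expect $2\alpha+\tau=0$, so that $\gamma$ cannot be normalized away by a null rotation and the null-rotation freedom survives into first order. This is precisely the opposite of the $\CH_1$ case \eqref{eq:Nconfig6}, and it is the mechanism that makes the geometry pseudo-stable rather than ordinary. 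I would then feed $2\alpha+\tau=0$ back into the NP and Bianchi equations \eqref{eq:NP1}-\eqref{eq:NP5} to pin down the remaining constants and identify the genuinely propagating first-order invariant (a suitable combination of $\epsilon,\tau,\gamma$).

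Next I would normalize that first-order invariant and examine what subgroup $G_1$ survives: by the definition of pseudo-stabilization $s_1=s_0=1$, so a one-parameter group persists, and I must track how it acts on the second-order quantities ($\pi,\lambda,\mu,\nu$ via \eqref{eq:NP7}-\eqref{eq:NP9} and the second set of Bianchi/NP relations). The order is then bounded by counting how many further functionally independent invariants can appear before the horizontal rank stabilizes. Applying the commutator relations \eqref{eq:commutator2} and \eqref{eq:commutator3} to the surviving invariants, as in the chain $\nu\to\Delta\nu\to\Delta^2\nu$ of the previous proof, I expect to show that $D$ annihilates each successive invariant and that the $\delta$- and $\Delta$-derivatives close up, forcing the exterior product of three successive differentials to vanish by fourth order. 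This yields $t_4=t_3$, hence $q-1\leq 3$.

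The main obstacle will be the bookkeeping of the residual null-rotation freedom through the higher-order derivatives: because $s_1=1$ here (unlike the fully fixed $\CH_1$ case), one must carefully choose invariant combinations that are genuinely null-rotation invariant before asserting functional dependence, otherwise an apparent new invariant may just be frame freedom in disguise. The delicate point is verifying that the commutator relations continue to give $D(\text{invariant})=0$ and a closing $\delta$-action after the pseudo-stabilization constraint $2\alpha+\tau=0$ is imposed, so that the wedge $\d(\cdot)\wedge\d(\cdot)\wedge\d(\cdot)$ vanishes already at fourth order rather than persisting to fifth. I expect the constraint $2\alpha+\tau=0$ to be exactly what collapses the derivative chain one step earlier than in the generic case, which is what separates pseudo-stabilization from a genuine maximal-order configuration.
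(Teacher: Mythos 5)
Your plan follows the paper's proof in all essentials: pseudo-stabilization forces the first-order torsion to be null-rotation invariant, which via \eqref{eq:nr3}--\eqref{eq:nr6} yields $\epsilon=\sigma=0$ and $\alpha=-\tau/2$ exactly as you predict, and the bound then follows because the surviving invariants ($\gamma$, then $\nu$ and $\Delta\nu$) are all annihilated by $D$, so the triple wedge of their differentials vanishes at fourth order. The only details you have not anticipated are that the NP equations force $\tau$ to be a constant $\tilde\tau$ (via $\delta\tau=-D\gamma$ together with $\d\tau\wedge\d\gamma=0$), leading to a case split on $\tilde\tau=0$ versus $\tilde\tau\neq 0$, and that in the latter (main) case the residual null rotation is actually exhausted at second order by normalizing $\Delta\gamma=0$ rather than persisting --- but these emerge naturally from the computation you outline.
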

\begin{proof}
  Referring to \eqref{type N Gamma matrix}, the assumption $s_1=1$
  implies that the remaining torsion scalars $\epsilon,\tau,\alpha$
  and $\gamma$ are invariant under arbitrary null rotations about
  $\bell$ and thus generate the 1st order Cartan invariants. By
  \eqref{eq:nr3}-\eqref{eq:nr6} this implies
  \[ \epsilon=\sigma = 0,\quad \alpha=-\tau/2.\] Hence the 1st order
  invariants $R_{abcd;e}$ are generated by $\tau, \gamma$. By the
  pseudo-stabilization assumption $R_{abcd;e}$ is $G_0$-invariant, and
  hence, using the notation of Appendix \ref{sec:ch1},
  \[ R_{abcd;ef} = (\nabla R)_{abcde,f} + \Gamma^\rho{}_f (\bA_\rho
  \cdot \nabla R)_{abcdef}.\] It follows that the 2nd order components
  are linear combinations of $D\gamma,D\tau,\delta
  \tau,\Delta\tau,\delta\gamma,\Delta\gamma$ and quadratic polynomials
  of $\tau,\gamma$. Since the latter are null-rotation invariant, the
  2nd order Cartan invariants are obtained by normalizing the former.

  From here equations \eqref{eq:NP1}-\eqref{eq:NP5} reduce to
  \begin{gather}
    \label{eq:Nps1}
    D\tau = 0,\quad \delta\tau=-D\gamma = \tR/12+\tau^2.
  \end{gather}
  If $t_1>1$ then $q-1\leq 3$ automatically, so we may assume $t_1=1$. This implies $\d\tau \wedge \d\gamma = 0$, and in
  particular $D\gamma\delta\tau=0$.  Hence,
  \begin{equation}
    \label{eq:Nps2}
    D\gamma =  \tR+12\tau^2=0,
  \end{equation}
  which implies that $\tau = \tilde\tau$ is a constant.  This leaves
  $\gamma$ as the only generator of the 1st order invariants.  The
  transformation law \eqref{eq:nullrot} gives
  \begin{equation}\label{Deltagamma_trafo}
    \Delta'\gamma = \Delta\gamma -4x\gamma\tilde\tau.
  \end{equation}
  At this point, we must consider two cases.

  Case (a): suppose that $\tilde\tau=0$ Hence, $\Delta\gamma$ is
  null-rotation invariant, and hence is a Cartan invariant.  By
  \eqref{eq:NP7}
  \[ \delta \gamma = 0 .\]
  Since $\gamma,\Delta \gamma$ generate the 2nd order invariants, we
  have $s_2=1$. Applying
  \eqref{eq:commutator2} and \eqref{eq:commutator3} to $\gamma$ gives
  \begin{equation}
    \label{eq:Nps5}
    D\Delta\gamma = \delta\Delta\gamma= 0
  \end{equation}
  Hence, $\d\gamma \wedge \d\Delta\gamma = 0$. This implies $t_2=1$
  and thus $q=2$. Hence, the corresponding geometries
are not pseudo-stable.

  Case (b): suppose that $\tau\neq 0$. In view of
  \eqref{Deltagamma_trafo}, \eqref{eq:reflt_conn} and
  \eqref{eq:reflm_conn} we may fully fix the frame ($s_2=0$) by
  imposing the normalizations
  \begin{equation}
    \label{eq:Nps6}
    \gamma>0,\quad \tau>0,\quad \Delta\gamma = 0.
  \end{equation}
  By \eqref{eq:NP7},
  \begin{equation}
    \label{Dgamma_trafo}
    \delta\gamma=2\tilde\tau\gamma.
  \end{equation}
  It follows that $t_2=1$, and that the 1st and 2nd order invariants
  are generated by $\gamma$.  Again, using the notation of Appendix
  \ref{sec:ch1},
  \begin{equation}
    \label{eq:R3}
    R_{abcd;e_1e_2f} = (\nabla^2R)_{abcde_1e_2,f} + \Gamma^\alpha{}_f
    (\bA_\alpha\cdot \nabla^2 R)_{abcde_1e_2}.
  \end{equation}
  By \eqref{eq:Nps2} \eqref{eq:Nps6} \eqref{Dgamma_trafo}, the
  components $(\nabla^2R)_{abcde_1e_2,f}$ are generated by
  $\gamma$. Since the automorphism group of $\nabla^2 R$ is trivial,
  equations \eqref{eq:R3} can be solved for $\Gamma^\alpha{}_f$.  It
  follows that $\lambda,\nu,\pi$, together with $\gamma$, generate
  invariants of order 3 or less.  The commutator relations
  \eqref{eq:commutator2} and \eqref{eq:commutator3} applied to
  $\gamma$ give
  \[ \gamma\tilde\tau(\pi+\tilde\tau) = 0,\quad \gamma\tilde\tau\lambda=0.\]
  Hence,
  \begin{equation}
    \pi=-\tilde\tau,\quad\lambda =0.
  \end{equation}
  The remaining structure equations \eqref{eq:NP8} and \eqref{eq:NP9} reduce to
  \begin{gather}
    D\nu = 0,\quad \delta\nu = 4\nu\tilde\tau\pm 1/2.
  \end{gather}
  Observe that $d\gamma\wedge d\nu =0$ if and only if $\Delta\nu=0$;
  the corresponding rank sequence is
  \[ (s_p) = (1,1,0,0),\quad (t_p) = (0,1,1,1), \]
  and the order is $q-1=2$. Else we have $t_3=2$, the 4th order invariants
  being generated by $\gamma,\nu$ and $\Delta\nu$.
  Applying \eqref{eq:commutator2} to $\nu$ gives
  $ D\Delta\nu = 0$
  and thus
  \[ \d\gamma\wedge \d\nu\wedge \d\Delta \nu = 0.\]
  Hence, the rank sequence is
  \[ (s_p) = (1,1,0,0,0),\quad (t_p) = (0,1,1,2,2), \] and the order
  is equal to $q-1=3$.
\end{proof}

\subsection{Type DZ configurations}\label{subsec:typeDZ}
For this case, combining the curvature canonical form of Table
\ref{tab:petrovtype} and the $\CH_0$ property $t_0=0$ gives
\begin{equation}
  \label{eq:DZ1}
  \Psi_1=\Psi_3= 0,\quad \Psi_0=\Psi_4= 3\Psi_2=3\tilde{\Psi}_2\neq 0,\quad
  R=\tR,
\end{equation}
where $\tilde\Psi_2$
and $\tilde{R}\in\Rset$ are real constants.
The group $G_0$ preserving \eqref{eq:DZ1}
is generated by  spins \eqref{eq:spin} and the reflections \eqref{eq:reflt_def},
\eqref{eq:reflm_def}, \eqref{eq:reflx_def}.
The 1st order torsion is
\begin{equation}
  \label{eq:DZ2}
  (\Gamma^\rho{}_a) =
  \begin{pmatrix}
    \epsilon & \alpha & \gamma\\
    \kappa-\pi & \sigma-\lambda & \tau-\nu
  \end{pmatrix}.
\end{equation}
Substituting \eqref{eq:DZ1} into the Bianchi
equations \eqref{eq:bianchi1}-\eqref{eq:bianchi3} yields
\begin{equation}
   \label{eq:DZ3}
   \sigma-\lambda = -2\gamma = 2\epsilon,\quad  \kappa- \pi = -(\tau-\nu).
\end{equation}


\begin{proposition}
  \label{prop:DZCH0}
  There does not exist a type DZ, $\CH_0$ geometry with
  pseudo-stabilization.
\end{proposition}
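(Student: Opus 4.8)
```latex
The plan is to mirror the structure of Proposition \ref{prop:NCH0},
but to exploit the fact that in the type DZ setting the residual group
$G_0$ contains a one-parameter family of \emph{spins}
\eqref{eq:spin} rather than null rotations about $\bell$. The
pseudo-stabilization hypothesis means $s_1 = s_0 = 1$, so after
normalizing the $0$th-order curvature \eqref{eq:DZ1} the first-order
torsion components $\Gamma^\rho{}_a$ of \eqref{eq:DZ2} must be invariant
under the full spin subgroup. First I would write down the
transformation rules of the entries of \eqref{eq:DZ2} under a spin
(using the rules recorded in the appendix): the spin acts on the frame
by a boost/rotation, so each spin coefficient picks up a definite
weight. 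Demanding invariance under this one-parameter group forces all
spin coefficients of nonzero weight to vanish, leaving only the
weight-zero combinations as candidate first-order invariants. Combined
with the Bianchi relations \eqref{eq:DZ3}, which already tie
$\sigma-\lambda$, $\gamma$, $\epsilon$ and $\kappa-\pi$, $\tau-\nu$
together, this should collapse \eqref{eq:DZ2} to a very small number of
surviving scalars.

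Next I would feed these surviving, spin-invariant torsion scalars into
the relevant Newman--Penrose-type structure equations
\eqref{eq:NP1}--\eqref{eq:NP5} and read off the constraints that
$\CH_0$ (constant $0$th-order curvature) imposes. Because the
$0$th-order components in \eqref{eq:DZ1} are constant, each equation of
the form $D\Psi_i = \cdots$ or $\delta R = \cdots$ becomes an algebraic
relation among the torsion scalars, since the left-hand sides vanish.
My expectation is that these algebraic relations, together with the
weight constraints from the spin action, will be mutually
\emph{over-determined}: the spin-invariance kills the very coefficients
that the Bianchi and structure equations would otherwise require to be
nonzero in order to produce a nonconstant first-order invariant. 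In
other words, pseudo-stabilization ($t_1 > t_0 = 0$, so there must be a
genuinely nonconstant first-order invariant) and spin-invariance
($s_1 = 1$) should prove to be incompatible.

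The cleanest way to organize the contradiction is to argue as follows.
Pseudo-stabilization at order $1$ demands, by the remark following
Definition \ref{def:pseudo-stab}, that $t_1 > t_0$, i.e.\ that at least
one nonconstant spin-invariant first-order invariant exists. I would
show that the spin-weight analysis leaves only weight-zero scalars, and
that the structure equations force every such candidate to be a
constant (typically pinning it to $\tilde\Psi_2$ and $\tR$). Concretely,
the equations should yield that all the surviving torsion scalars are
determined algebraically by the constants $\tilde\Psi_2$, $\tR$, hence
are themselves constant, giving $t_1 = 0 = t_0$. This contradicts the
requirement $t_1 > t_0$ inherent in $1$st-order pseudo-stabilization,
and no such geometry can exist.

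The hard part will be the bookkeeping in the second step: one must be
careful that the algebraic consequences of $\CH_0$ applied to
\eqref{eq:NP1}--\eqref{eq:NP5} are combined correctly with
\eqref{eq:DZ3} and the spin-weight vanishing, since in the DZ case the
two principal null directions are complex-conjugate and the reflections
\eqref{eq:reflt_def}, \eqref{eq:reflm_def}, \eqref{eq:reflx_def} may
identify scalars that look distinct. The main obstacle is ensuring that
the spin-invariance genuinely forces \emph{all} remaining scalars to be
constant rather than merely restricting them to a lower-dimensional
family; if even a single nonconstant weight-zero invariant survived,
pseudo-stabilization would not be excluded, so this vanishing must be
established exhaustively.
```
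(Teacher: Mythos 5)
Your plan is essentially the paper's proof: spin-invariance of the first-order torsion (forced by $s_1=s_0=1$) via \eqref{eq:spin4} kills all nonzero-weight combinations, leaving $\alpha$ as the sole generator of first-order invariants, and then a single structure equation, \eqref{eq:NP5}, pins it down as $-4\alpha^2+\tilde\Psi_2-\tilde R/12=0$, so $\alpha$ is constant and $t_1=0$, contradicting pseudo-stabilization. Your "expectation" that the equations over-determine the surviving scalar is exactly what the paper verifies, so the approach is correct and the same.
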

\begin{proof}
  The assumption implies that $t_0=0,t_1>0$ and that the 1st order torsion
  is spin-invariant.  By \eqref{eq:spin4}, this entails
  \begin{gather}
    \sigma - \lambda = \gamma=\epsilon = 0,\quad \kappa-\pi
    =-(\tau-\nu)= 2\alpha.
  \end{gather}
  Hence, $\alpha$ generates the 1st order invariants.  However,
  \eqref{eq:NP5} entails
  \begin{equation}
    0=2 ( \tau+ \kappa-2\alpha) \alpha +(\kappa-2\alpha) \tau  -\kappa
    (2\alpha+\tau)  +\Psi_2-R/12 = -4\alpha^2+\tilde\Psi_2-\tilde R/12.
  \end{equation}
  This implies that $\alpha$ is a constant, which contradicts the
  $t_1>0$ assumption.
\end{proof}

\begin{proposition}
  \label{prop:DZCH1} A type DZ, properly $\CH_1$ geometry does not
  exist.
\end{proposition}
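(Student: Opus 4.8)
The plan is to assume a type DZ, properly $\CH_1$ geometry exists and to derive a contradiction. Proper $\CH_1$ means $t_0=t_1=0$ but $t_2>0$; the zeroth-order data is \eqref{eq:DZ1}, with $G_0$ generated by the spins \eqref{eq:spin} and the three reflections, so $s_0=1$. My first step is to determine $s_1$. If $s_1=s_0=1$, then since also $t_1=t_0=0$, Theorem \ref{thm:karlhede} forces $q=1$, so the geometry is locally homogeneous and $t_p=0$ for all $p$ by Singer's theorem, contradicting $t_2>0$. Hence $s_1=0$: the spin must be completely fixed at first order.

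Working with the Bianchi-reduced torsion \eqref{eq:DZ3}, the matrix \eqref{eq:DZ2} is controlled by $\epsilon$, $\alpha$ and $\beta:=\kappa-\pi$ through $\gamma=-\epsilon$, $\sigma-\lambda=2\epsilon$, $\tau-\nu=-\beta$. As in the proof of Proposition \ref{prop:DZCH0}, $\alpha$ is the only spin-invariant first-order scalar, whereas $\epsilon$ and $\beta-2\alpha$ transform non-trivially under the (rotational) spin \eqref{eq:spin}, so I would treat $(\epsilon,\beta-2\alpha)$ as a rotation doublet. Reaching $s_1=0$ then amounts to normalizing this doublet, say $\beta=2\alpha$ with $\epsilon>0$ (the sign fixed by a reflection); the doublet cannot vanish, else we are back at $s_1=1$, so $\epsilon\neq 0$. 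Since $t_1=0$, the surviving scalars --- $\alpha$, $\epsilon$, and hence every combination in \eqref{eq:DZ2} --- are honest constants on $M$, so all their directional derivatives vanish.

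The core of the argument is to substitute this constant, frame-fixed torsion together with the constant curvature \eqref{eq:DZ1} into the structure equations. Because $\alpha,\epsilon$ are constant, \eqref{eq:NP1}-\eqref{eq:NP5} collapse to algebraic relations among the first-order constants and the complementary second-order coefficients $\kappa+\pi,\sigma+\lambda,\tau+\nu$; setting $\epsilon=0$ recovers the relation $-4\alpha^2+\tilde\Psi_2-\tR/12=0$ of Proposition \ref{prop:DZCH0}. I would then show that the full system \eqref{eq:NP1}-\eqref{eq:NP9} determines each sum-coefficient as a function of $\alpha,\epsilon$, so every second-order scalar is constant and $t_2=0$, contradicting proper $\CH_1$. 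A likely alternative outcome is that the same system is consistent only when $\epsilon=0$, which reinstates $s_1=1$ and the homogeneous contradiction; either way proper $\CH_1$ is impossible.

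The hard part will be this algebraic closure. Since type DZ is the complex degenerate type, its spin is a rotation and the non-invariant pair $(\epsilon,\beta-2\alpha)$ couples the sum-coefficients $\kappa+\pi,\sigma+\lambda,\tau+\nu$ in an intertwined fashion; one must check carefully that no combination survives as a non-constant second-order invariant, with the commutators \eqref{eq:commutator2}, \eqref{eq:commutator3} supplying the integrability constraints. I expect the collapse to parallel the mechanism by which \eqref{eq:NP5} already forces $\alpha$ to be constant in the proof of Proposition \ref{prop:DZCH0}.
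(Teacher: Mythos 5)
Your setup is sound and runs parallel to the paper's: you correctly deduce $s_1=0$ via Theorem \ref{thm:karlhede} (a step the paper simply assumes), identify the spin doublet coming from \eqref{eq:spin4}, and use it to fix the frame. Your normalization ($\kappa-\pi=2\alpha$ with $\epsilon>0$) is the complementary one to the paper's ($\epsilon=\gamma=0$, $\sigma=\lambda$, $\kappa-\pi-2\tilde\alpha<0$); both are legitimate, differing by a quarter-turn of the doublet. One slip worth flagging: $\alpha$ by itself is \emph{not} the spin-invariant first-order scalar. Combining \eqref{eq:spin1}--\eqref{eq:spin4} with the Bianchi relations \eqref{eq:DZ3}, the invariant combination is $2\alpha+(\kappa-\pi)$ and the rotating doublet is $\bigl(2\alpha-(\kappa-\pi),\,4\epsilon\bigr)$; in Proposition \ref{prop:DZCH0} the identity $\kappa-\pi=2\alpha$ was a \emph{consequence} of demanding the whole torsion be spin-invariant, so your analogy with that proof is not quite parallel. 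This does not invalidate your normalization, which remains coherent.

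The genuine gap is that the decisive step is announced but never carried out. The entire content of the proposition is the algebraic closure you defer to the last paragraph: that substituting the constant, normalized first-order data into \eqref{eq:NP1}--\eqref{eq:NP9} forces every remaining spin coefficient to be constant. You present this as an expectation (``I would then show\dots'', ``I expect the collapse to parallel\dots'') and even hedge between two possible outcomes, so no contradiction is actually derived. The computation does close under your normalization, and not by forcing $\epsilon=0$: with $\gamma=-\epsilon$, $\lambda=\sigma-2\epsilon$, $\kappa=\pi+2\alpha$, $\tau=\nu-2\alpha$ and $\Psi_0=\Psi_4=3\Psi_2$, equation \eqref{eq:NP3} reduces to $4\epsilon(\pi+\alpha)=0$, so $\pi=-\alpha$ and $\kappa=\alpha$; equation \eqref{eq:NP7} reduces to $4\epsilon(\alpha-\nu)=0$, so $\nu=\alpha$ and $\tau=-\alpha$; and eliminating $D\sigma$ between \eqref{eq:NP1} and \eqref{eq:NP6} gives $8\epsilon\sigma=4\epsilon^2+4\alpha^2-\tilde{\Psi}_2+\tR/12$, whence $\sigma$ and $\lambda$ are constant. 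Everything of second order is then constant, contradicting $t_2>0$. (In the paper's normalization the analogous pivot is \eqref{eq:NP3} in the form $\sigma(C_1+2\tilde\alpha)=0$.) Until some such computation is exhibited, the proof is incomplete.
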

\begin{proof}
  In addition to \eqref{eq:DZ1} we assume that $t_1=0,\; t_2>0$
  and that $s_1=0$.  The $t_1=0$ assumption means that
  post-normalization, the torsion components \eqref{eq:DZ2} are
  constant, say
\begin{equation}
  \label{eq:DZ2a}
  -(\kappa-\pi) = \tau-\nu = C_1,\quad \sigma-\lambda = -2\gamma = 2\epsilon = C_2,\quad \alpha = \tilde\alpha .
\end{equation}
Transformation law \eqref{eq:spin4} now reads
\[ (C_1 + 2\alpha + 2 i C_2)' = e^{\pm 2i t} \text{LHS}.\]
Since $s_1=0$ this  cannot be zero and we may therefore impose the
normalization
\[ C_2=0,\quad C_1+2\tilde\alpha>0.\] Applying \eqref{eq:DZ2a},
\eqref{eq:DZ3} to equation \eqref{eq:NP3} gives
\begin{equation}
  \label{eq:DZ6}
  \sigma(C_1+2\tilde\alpha)=0
\end{equation}
Then, \eqref{eq:NP1}-\eqref{eq:NP9} entail
\[ \sigma = \lambda = 0,\quad\pi = \tau=-\nu=-\kappa=C_1/2. \]
Hence, all 2nd order Cartan invariants are constant, a
contradiction.
\end{proof}

\subsection{Type D configurations}\label{subsec:typeD}
\begin{proposition}
  \label{prop:DCH0}
  There does not exist a type D, properly $\CH_0$ geometry with
  pseudo-stabilization.
\end{proposition}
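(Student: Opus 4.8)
The plan is to reuse, essentially verbatim in structure, the template already established for the type N and DZ pseudo-stable cases (Propositions \ref{prop:NCH0} and \ref{prop:DZCH0}): assume that a nonconstant first-order invariant exists and derive a contradiction by combining boost-invariance with the Bianchi and structure equations. First I would write down the type D canonical curvature form from Table \ref{tab:petrovtype}, namely $\Psi_0=\Psi_1=\Psi_3=\Psi_4=0$, $\Psi_2=\tilde\Psi_2\neq 0$, together with $R=\tR$, where both $\tilde\Psi_2$ and $\tR$ are real constants forced by the $\CH_0$ hypothesis $t_0=0$. The residual group $G_0$ preserving this form is generated by the boosts about the (real) doubly aligned null direction together with the relevant reflections \eqref{eq:reflt_def}, \eqref{eq:reflm_def}, \eqref{eq:reflx_def}. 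By Proposition \ref{prop:pseudostab}, pseudo-stabilization means $s_1=s_0=1>s_q$, so that $t_0=0$, $t_1>0$, and---crucially---the entire first-order torsion must be invariant under the one-parameter boost subgroup.

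Next I would write the first-order torsion matrix $(\Gamma^\rho{}_a)$ for type D in terms of the spin coefficients and impose boost-invariance. Each torsion component carries a definite boost weight, so the condition $s_1=1$ forces every component of nonzero weight to vanish and leaves only the zero-weight (boost-invariant) combinations as potential generators of the first-order Cartan invariants. Substituting the canonical form into the Bianchi equations \eqref{eq:bianchi1}--\eqref{eq:bianchi3} then produces algebraic relations among these surviving components, reducing them to at most a single functionally independent scalar---precisely the role played by $\alpha$ in the DZ argument. At this stage the geometry is controlled by one invariant, whose constancy would immediately give $t_1=0$.

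The decisive step, and the main obstacle, is to show that this surviving scalar is in fact constant. Here I would feed the canonical form and the boost-invariance relations into the structure equations \eqref{eq:NP1}--\eqref{eq:NP9}. As in the relation \eqref{eq:DZ6} of the DZ case, or the algebraic identity $-4\alpha^2+\tilde\Psi_2-\tR/12=0$ obtained there from \eqref{eq:NP5}, I expect one (or a suitable combination) of the NP equations to collapse to a purely algebraic identity expressing the surviving scalar in terms of $\tilde\Psi_2$ and $\tR$, thereby forcing it to be constant. The subtlety is that the boost weights distribute the spin coefficients across the torsion matrix differently than the spin weights did in the DZ case, so the particular linear combination of NP relations whose derivative terms cancel must be pinned down by direct computation; unlike the DZ case, where the single equation \eqref{eq:NP5} sufficed, I anticipate having to combine two or more NP relations, and possibly apply a commutator relation \eqref{eq:commutator2} or \eqref{eq:commutator3} to the surviving scalar, before the frame-derivative terms drop out and leave an algebraic constraint. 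Once the surviving scalar is shown to be constant we obtain $t_1=0$, contradicting the pseudo-stabilization hypothesis $t_1>0$, and the proof is complete.
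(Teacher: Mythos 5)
Your proposal follows the paper's proof essentially step for step: the canonical type D form with constant $\tilde\Psi_2\neq 0$ and $\tilde R$, boost-invariance forcing the nonzero-weight torsion components $\kappa,\sigma,\lambda,\nu$ to vanish, the Bianchi relations giving $\pi=\tau$ as the single surviving scalar, and an algebraic constraint from the NP equations forcing that scalar to be constant, contradicting $t_1>0$. The combination you anticipate having to find is exactly the paper's: adding \eqref{eq:NP4} to \eqref{eq:NP6} cancels the $\delta\tau$ and $\delta\pi$ derivative terms and leaves $0=2\tau^2+\tilde{\Psi}_2-\tilde{R}/6$, so $\tau$ is constant and $t_1=0$.
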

\begin{proof}
  The curvature normalization from Table \eqref{tab:petrovtype} and the $\CH_0$ assumption give
  \begin{equation}
    \label{eq:DD1}
    \Psi_0=\Psi_1=\Psi_3=\Psi_4 = 0,\quad \Psi_2=\tilde\Psi_2,\;
    R=\tilde{R},
  \end{equation}
  where $\tilde\Psi_2\neq 0$ and $\tilde{R}$ are constants, and we
  also assume
  \[ s_0=s_1=1,\quad t_0=0,\; t_1> 0. \]
  The 1st order torsion is
  \begin{equation}
    \label{eq:typeDtorsion}
    (\Gamma^\rho{}_a)=
    \begin{pmatrix}
      \kappa & \sigma & \tau\\
      \pi &\lambda & \nu
    \end{pmatrix},
  \end{equation}
  where, by the Bianchi relations,
  \begin{equation}
    \label{eq:Dconfig1}
    \sigma=\lambda=0,\quad \pi=\tau.
  \end{equation}
  By the boost transformation laws \eqref{eq:boost1}-\eqref{eq:boost6}, in order to have $s_1=1$ we require
  \[ \kappa = \nu = 0.\] Adding \eqref{eq:NP4} to \eqref{eq:NP6}
  yields
  \[0=2\tau^2+\tilde{\Psi}_2-\tR/6,\] which implies that $\tau$ is
  constant.  Hence $t_1 = 0$, contradicting our assumption.
\end{proof}



\begin{proposition}
  \label{prop:DCH1}
  Up to $O(\eta)$ conjugation, the unique type D, properly $\CH_1$
  configuration is
  \begin{gather}
  \label{eq:D1}
  \Psi_0=\Psi_1=\Psi_3=\Psi_4 = 0,\quad \Psi_2=-\frac 23
  (C+2T^2),\quad R=4(C-T^2),\\
    \label{eq:D13}
    \sigma=\alpha=\gamma=\lambda=\nu=0,\quad
    \pi=\tau=T,\quad \kappa=1,\\
    \label{eq:D8}
    \delta \epsilon= -T\epsilon,\quad
  \Delta\epsilon = C,\quad \epsilon > 0,
  \end{gather}
  where $T$ and $C$ are constants such that $C+2T^2\neq 0$.
\end{proposition}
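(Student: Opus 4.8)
The plan is to carry out the curvature-normalization scheme exactly as in the type~N and type~DZ cases, now with the boost as the residual symmetry. I begin by imposing the type~D canonical form together with the $\CH_0$ hypothesis, so that the curvature is given by \eqref{eq:DD1} with constant $\tilde\Psi_2\neq0$, $\tR$, and the first-order torsion is \eqref{eq:typeDtorsion}. The Bianchi identities \eqref{eq:bianchi1}--\eqref{eq:bianchi3} reduce this to \eqref{eq:Dconfig1}, namely $\sigma=\lambda=0$ and $\pi=\tau$, so that the only candidate first-order invariants are $\kappa,\tau,\nu$. The properly-$\CH_1$ assumption then supplies $t_1=0$ (these three are constant after normalization), $t_2>0$, and $s_1=0$ (the one-parameter boost group must be completely spent).

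Next I fix the frame and run the structure equations. By the boost laws \eqref{eq:boost1}--\eqref{eq:boost6}, $\kappa$ and $\nu$ carry opposite nonzero boost weights while $\tau$ is boost-invariant; since $s_1=0$ not both of $\kappa,\nu$ can vanish, and a discrete reflection in $G_0$ interchanging $\bell,\bn$ lets me assume $\kappa\neq0$, after which I normalize $\kappa=1$. This spends the boost and promotes the boost-direction coefficient $\epsilon$ to an honest function on $M$. Feeding the constants $\kappa=1,\tau,\nu$ and \eqref{eq:DD1} into \eqref{eq:NP1}--\eqref{eq:NP9}, and imposing $D\kappa=\delta\kappa=\Delta\kappa=0$ together with the analogous constancy of $\tau$ and $\nu$, I expect the system to force $\nu=\alpha=\gamma=0$ and to leave a single algebraic relation tying $\tau$, $\tilde\Psi_2$ and $\tR$ together. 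Writing $\tau=\pi=T$ and introducing the constant $C:=\Delta\epsilon$, these relations should collapse to $\tilde\Psi_2=-\tfrac{2}{3}(C+2T^2)$ and $\tR=4(C-T^2)$.

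Finally I extract the evolution of $\epsilon$. Applying the commutators \eqref{eq:commutator2}--\eqref{eq:commutator3} to the now-constant torsion, and using the leftover NP equations, should yield $\delta\epsilon=-T\epsilon$ and $\Delta\epsilon=C$, with $D\epsilon$ determined consistently; the residual reflection then normalizes $\epsilon>0$. Non-constancy of $\epsilon$ is precisely the $t_2>0$ condition, and $\tilde\Psi_2\neq0$ is equivalent to $C+2T^2\neq0$. The main obstacle is the elimination in the middle step: one must thread the fully coupled system \eqref{eq:NP1}--\eqref{eq:NP9} under all the constancy constraints on $\kappa,\tau,\nu$ and the Bianchi-reduced curvature, proving that $\epsilon$ is the \emph{sole} surviving non-constant quantity and that no competing branch---$\nu\neq0$, or $\alpha,\gamma$ not both zero---is compatible with $t_1=0<t_2$. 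This is a delicate differential-algebraic consistency check whose outcome hinges on the explicit coefficients of the NP equations.
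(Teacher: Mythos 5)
Your skeleton coincides with the paper's proof: the same Bianchi reduction to $\sigma=\lambda=0$, $\pi=\tau$; the same use of the opposite boost weights of $\kappa$ and $\nu$ plus the reflection \eqref{eq:reflx_def} to reach $\kappa\neq 0$ and then $\kappa=1$; and every outcome you predict (the vanishing of $\nu,\alpha,\gamma$, the formulas for $\Psi_2$ and $R$, the relations $\delta\epsilon=-T\epsilon$, $\Delta\epsilon=C$) is correct. However, the step you defer --- ``I expect the system to force $\nu=\alpha=\gamma=0$'' --- is the entire content of the proposition, and as written your proposal announces it rather than proving it. That is a genuine gap.

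The gap is filled by a short computation, not the ``delicate differential-algebraic consistency check'' you anticipate. Once $R,\Psi_2,\kappa,\sigma,\tau,\pi,\lambda,\nu$ are constant with $\sigma=\lambda=0$, $\pi=\tau$, and the type~D curvature \eqref{eq:DD1} is inserted, exactly four of the nine structure equations --- \eqref{eq:NP1}, \eqref{eq:NP2}, \eqref{eq:NP8}, \eqref{eq:NP9} --- degenerate to the product relations $\alpha\kappa=\gamma\kappa=\epsilon\nu=\alpha\nu=0$. With $\kappa=1$ the first two kill $\alpha$ and $\gamma$. The branch $\nu\neq 0$ is \emph{not} excluded by the structure equations or by $s_1=0$: it is excluded because $\epsilon\nu=0$ would then also force $\epsilon=0$, so the entire second-order torsion $(\alpha,\gamma,\epsilon)$ would vanish identically, contradicting $t_2>0$. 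This is the one place where the properly-$\CH_1$ hypothesis does the work, and your sketch does not isolate it. A minor ordering issue: you set $C:=\Delta\epsilon$ before knowing that $\Delta\epsilon$ is constant; the paper instead defines $C$ through $\Psi_2=-\tfrac{2}{3}(C+2T^2)$ and reads $\Delta\epsilon=C$ off \eqref{eq:NP5}, which with $\alpha=\gamma=\nu=0$ becomes $-\Delta\epsilon=\pi\tau+\Psi_2-R/12$, manifestly constant. Either order is fine, but only after the elimination has actually been carried out.
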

\begin{proof}
Suppose that the curvature is type D, and that
\[ (t_p) = (0,0,t_2,\ldots),\quad t_2>0 .\]
As above we have \eqref{eq:DD1} and \eqref{eq:typeDtorsion}. The
curvature automorphism group $G_0$ is generated by the 1-dimensional
group of boost transformations \eqref{eq:boost0} and the discrete
transformations \eqref{eq:reflt_def}, \eqref{eq:reflm_def},
\eqref{eq:reflx_def}. The corresponding transformation laws are
shown in Appendix \ref{app:Lorentz-trafos}.
Since $t_0=t_1=0$ this means that post-normalization, $R, \Psi_2,
\kappa, \sigma, \tau, \pi, \lambda,\nu$ are all constant, where we
put $\pi=\tau=T$. As above, the Bianchi identities give
\eqref{eq:Dconfig1}. Equations \eqref{eq:NP1}, \eqref{eq:NP2},
\eqref{eq:NP8} and \eqref{eq:NP9} reduce to
\begin{equation}
  \label{eq:D4}
  \alpha\kappa = \gamma\kappa = \epsilon \nu = \alpha\nu = 0.
\end{equation}
If $\kappa= \nu\equiv0$, identically, then by \eqref{eq:boost1} -
\eqref{eq:boost6} the first order torsion is boost-invariant, which
violates the assumption $s_1=0$.  Suppose then that $\kappa\neq 0$.
By the (N2) maximality assumption in Definition \ref{def:curvreg},
$\kappa$ cannot change sign.  Using \eqref{eq:boost5} and
\eqref{eq:reflm_conn}, we impose the normalization
\begin{equation}
  \label{eq:D6}
  \kappa = 1.
\end{equation}
The 2nd order torsion is $\alpha,\gamma,\epsilon$. If $\nu\neq 0$
then, by \eqref{eq:D4} the 2nd order torsion vanishes, which
violates the assumption $t_2>0$.   Therefore,
\begin{equation}
  \label{eq:D5}
  \nu = \alpha=\gamma\equiv 0,
\end{equation}
identically. Hence, $t_2=1$. The case where $\kappa\equiv 0, \nu\neq
0$ does not not need to be analyzed, because it can be reduced to
the present case by the Lorentz transformation \eqref{eq:reflx_def},
\eqref{eq:reflx_conn}. Again by the (N2) assumption of maximal
normalization, $\epsilon$ must have definite sign. Using
\eqref{eq:reflt_def}, \eqref{eq:reflt_conn} to impose the
normalization $\epsilon>0$ fully fixes the frame. Taking the second
part of \eqref{eq:D1} as a definition for the constant $C$, equation
\eqref{eq:NP4} gives
\[
  R =  -6\Psi_2-12\tau^2 = 4(C-T^2).
\]
The rest
of \eqref{eq:NP1}-\eqref{eq:NP9} are either satisfied identically, or
reduce to \eqref{eq:D8}.
\end{proof}

Above, we have derived a unique set of necessary conditions for a
type D properly $\CH_1$ geometry.  In other words, if such a metric
exists, then around every point there exists a unique
null-orthogonal moving frame such that \eqref{eq:D1} --
\eqref{eq:D8} hold.  Such geometries feature 1st order invariants
$C,T$, which must be constants, and a unique, up to functional
dependence, non-constant 2nd order invariant $\epsilon$.  This is
the necessity question.  Next, we consider sufficiency.

The configuration equations \eqref{eq:D1} -- \eqref{eq:D8} constitute
a system of partial differential equations for type D, properly
$\CH_1$ metrics.  We reformulate this system as the structure
equations of a generalized Cartan realization problem
\cite[appendix]{bryant2} \cite[Section 3]{fernstru} using Bryant's
recent treatment \cite{bryant1} of the realization problem.  To wit,
\eqref{eq:D1}  -- \eqref{eq:D8} is equivalent to
\begin{align}
  \label{eq:dom0}
  &\d\omega^0 = -T \omega^0\wedge\omega^1,\\
  \label{eq:dom1}
  &\d\omega^1 = -4T \omega^0\wedge \omega^2,\\
  \label{eq:dom2}
  &\d\omega^2 = \omega^0\wedge\omega^1+2\epsilon\omega^0\wedge\omega^2
  - T \omega^1\wedge\omega^2,\\
  \label{eq:depsilon0}
  &\d\epsilon = P \omega^0- T\epsilon \omega^1  +C\omega^2,\quad
  \text{where } P= D\epsilon.
\end{align}

\begin{proposition}
  \label{prop:DCH1inv}
  Up to diffeomorphism, the general solution of \eqref{eq:dom0} --
  \eqref{eq:depsilon0} depends on one function of one variable.
\end{proposition}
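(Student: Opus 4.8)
The plan is to treat \eqref{eq:dom0}--\eqref{eq:depsilon0} as an exterior differential system on the coframe bundle and apply Bryant's formulation of the generalized Cartan realization problem \cite{bryant1}. The unknowns are the three coframe 1-forms $\omega^0,\omega^1,\omega^2$ together with the scalar $\epsilon$ and the auxiliary function $P=D\epsilon$; the constants $C,T$ are fixed structure parameters. First I would verify the compatibility (integrability) of the system: since the right-hand sides of \eqref{eq:dom0}--\eqref{eq:dom2} and \eqref{eq:depsilon0} are prescribed in terms of the coframe and the scalar $\epsilon$, the exterior system closes provided $\d(\d\omega^a)=0$ and $\d(\d\epsilon)=0$ hold as formal consequences. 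Imposing $\d^2\epsilon=0$ will force a structure equation for the new scalar $P$, namely an expression for $\d P$ in terms of $\omega^0,\omega^1,\omega^2$ and the scalars $\epsilon, P$ and possibly a further scalar $Q=DP$. The key computation is therefore to differentiate \eqref{eq:depsilon0} once, extract the coefficient of each $\omega^i\wedge\omega^j$, and read off $\d P$.

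\textbf{Counting the functional freedom.}
Once the system is closed, the count of free functions is governed by the rank of the ``vertical'' derivatives, i.e.\ how many new independent scalars are introduced at each prolongation step before the system stabilizes. I would track the tower $\epsilon \to P=D\epsilon \to Q=DP \to \cdots$, checking at each stage whether the new scalar is genuinely independent or is algebraically determined by the previous ones via the closure relations. The expectation, consistent with the rank sequence $(t_p)=(0,0,1,2,3,3)$ established in the proof of Proposition \ref{prop:pseudostab} for the properly $\CH_1$ case, is that exactly one functionally independent invariant appears and its successive derivatives generate a finitely-terminating chain. In Bryant's framework the dimension of the moduli of solutions, up to diffeomorphism, equals the number of free constants (here absorbed into $C,T$) plus the number of functions-of-one-variable needed to specify the terminal data; the claim that this is precisely one function of one variable should emerge from the observation that a single scalar $\epsilon$ propagates along one distinguished direction with its transverse derivatives determined by \eqref{eq:dom0}--\eqref{eq:depsilon0}.

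\textbf{Executing the realization argument.}
Concretely, I would argue as follows. By the Frobenius-type theorem underlying the Cartan realization problem (see \cite[Section 3]{fernstru} and \cite[appendix]{bryant2}), a solution is determined by its value at a point together with the free functional data identified above; distinct choices of that data give locally non-isometric solutions, and every solution arises this way. The distinguished scalar $\epsilon$, being the unique non-constant second-order Cartan invariant, can be used (at least locally, where $\d\epsilon\neq 0$) as a coordinate. The closure relations express all higher structure functions as universal functions of $\epsilon$ once a single function-of-one-variable is freely chosen; this free function is exactly the $F(u)$ that appears in the eventual coordinate form \eqref{eq:ds1}--\eqref{eq:adef}. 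I would then invoke Bryant's existence-and-uniqueness statement to conclude that the general solution depends, up to diffeomorphism, on one arbitrary function of one variable.

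\textbf{The main obstacle.}
The hard part will be the bookkeeping in the closure step: verifying that the prolongation terminates and that it introduces exactly one functional degree of freedom rather than zero (which would give a discrete or finite-parameter family) or more than one. This requires carefully differentiating \eqref{eq:depsilon0}, using the structure equations \eqref{eq:dom0}--\eqref{eq:dom2} to simplify, and confirming that the resulting constraint on $\d P$ does not over-determine the system for generic $C,T$. A secondary subtlety is handling the locus where $\d\epsilon=0$, i.e.\ ensuring the ``general solution'' description is valid on the open dense set where $\epsilon$ serves as a genuine coordinate, consistent with the nondegeneracy condition \eqref{eq:Fu-maxorder-cond} that distinguishes proper fourth order from lower order.
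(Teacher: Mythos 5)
Your overall framework is the paper's: close the differential ideal by computing $\d P$ (the paper's equation \eqref{eq:dP} gives $\d P = P_1\omega^0 + (C-2TP)\omega^1 + 2(C+2T^2)\epsilon\,\omega^2$, so closure does hold), then invoke Bryant's treatment of the generalized Cartan realization problem to read off the generality of the solution space. Where your proposal has a genuine gap is in the mechanism for the count. The paper does not ``track the tower $\epsilon\to P\to Q\to\cdots$ until it terminates'' --- indeed that tower does not terminate, since $P_k=D^{k+1}\epsilon$ keeps appearing at each prolongation. What the paper actually does is compute the symbol tableau $A=\spn(1\ 0\ 0)$ (the only underdetermined derivative is $DP=P_1$, free in the single direction $\omega^0$), compute its prolongation $A^{(1)}$, and verify Cartan's test $c_1+2c_2+3c_3=1=\dim A^{(1)}$. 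Involutivity is what licenses the existence theorem and delivers the count: the last nonvanishing character is $c_1=1$, hence one function of one variable. Your proposal never states this test, and without it the conclusion does not follow --- an ideal can be closed yet fail to be involutive, requiring further prolongation that could change (or obstruct) the count.

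A second, related misstep is the appeal to a ``Frobenius-type theorem.'' Frobenius is precisely the $c_1=c_2=c_3=0$ case, in which solutions form a finite-dimensional family determined by initial constants; that is the opposite of what is being claimed here. The relevant existence statement is Cartan--K\"ahler (in Bryant's formulation for realization problems), and it applies only after involutivity is checked. Your closing remarks about using $\epsilon$ as a coordinate and about non-isometric solutions are not needed for this proposition (the statement is only about generality up to diffeomorphism; the identification of genuinely inequivalent metrics is handled later via the form-preserving transformations \eqref{eq:fugeneric} and \eqref{eq:fuC0}). In short: right framework and correct first step, but the decisive computation --- tableau, characters, Cartan's test --- is missing, and the proposed substitute would not establish the result.
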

\begin{proof}
Writing
\begin{equation}
  \label{eq:dP}
  \d P = P_1 \omega^0 +(C-2TP)\omega^1+2(C+2T^2)\epsilon\,
  \omega^2,\quad \text{where } P_1 = D^2 \epsilon,
\end{equation}
a straightforward calculation shows that the differential ideal
generated by \eqref{eq:dom0}--\eqref{eq:depsilon0} is closed; i.e.,
$\d^2 = 0$.  The symbol tableau and its prolongation are
\[
A = \spn
  \begin{pmatrix}
    1&0&0
  \end{pmatrix},\quad
A^{(1)} =
  \spn \begin{pmatrix}
    1 & 0 &0\\
    0 & 0 &0\\
    0&0&0
  \end{pmatrix}.
\]
Hence, the reduced characters are $c_1=1, c_2=0, c_3=0$, with
\[
c_1+2c_2+3c_3= 1=\dim A^{(1)}.
\]
The tableau is involutive of rank 1. The
desired conclusion
now follows by  \cite{bryant1}.
\end{proof}
\begin{proposition}
  \label{prop:DCH1q5}
  Generically, the metric described by the preceding Proposition is
  classified by 5th order invariants.
\end{proposition}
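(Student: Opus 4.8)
The plan is to show that, for the normalized configuration of Proposition~\ref{prop:DCH1}, the horizontal rank sequence is generically $(t_p)=(0,0,1,2,3,3)$, which by Theorem~\ref{thm:karlhede} forces $q=5$. First I would record which functions generate the Cartan invariants at each order. Since the frame is fully fixed ($s_1=0$) and $R,\Psi_2$ together with the first-order torsion are constant, the only non-constant second-order invariant is $\epsilon$, so $t_2=1$ (as already obtained in Proposition~\ref{prop:DCH1}). Differentiating, the third- and fourth-order invariants are generated by $P=D\epsilon$ and $P_1=D^2\epsilon$ respectively: equations \eqref{eq:depsilon0} and \eqref{eq:dP} show that the $\delta$- and $\Delta$-derivatives of $\epsilon$ and of $P$ are already functions of $\epsilon,P$ (namely $\delta\epsilon=-T\epsilon$, $\Delta\epsilon=C$, $\delta P=C-2TP$, $\Delta P=2(C+2T^2)\epsilon$), so the only genuinely new invariant produced at each step is the $D$-derivative. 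Thus the order-$p$ invariants are generated by $\epsilon,P,\dots,P_{p-2}$, where $P_k=D^{k+1}\epsilon$.

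Next I would compute the ranks. Functional independence of $\{\epsilon,P\}$ and of $\{\epsilon,P,P_1\}$ amounts to the non-vanishing of $\d\epsilon\wedge\d P$ and $\d\epsilon\wedge\d P\wedge\d P_1$. The first two differentials are given by \eqref{eq:depsilon0} and \eqref{eq:dP}; for the third I would extract $\d P_1$ from the closure identity $\d^2 P=0$, which (using the structure equations \eqref{eq:dom0}--\eqref{eq:dom2}) determines the $\omega^1$- and $\omega^2$-components of $\d P_1$ as explicit functions of $\epsilon,P,P_1$ while leaving the $\omega^0$-component $P_2=D^3\epsilon$ free. Writing the three differentials in the fixed coframe $(\omega^0,\omega^1,\omega^2)$ yields a $3\times 3$ coefficient matrix whose top two rows are
\[
\begin{pmatrix} P & -T\epsilon & C\\ P_1 & C-2TP & 2(C+2T^2)\epsilon\end{pmatrix},
\]
and whose determinant is an explicit expression, affine in $P_2$ with leading coefficient $-2T(C+2T^2)\epsilon^2-C(C-2TP)$. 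Generically this determinant is non-zero, giving $t_3=2$ and $t_4=3$; its zero locus cuts out the degenerate solutions of \eqref{eq:dom0}--\eqref{eq:depsilon0}, which are precisely those for which the open condition \eqref{eq:Fu-maxorder-cond} fails once the system is integrated in Section~\ref{sec:3Dmax}.

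Finally, since $\dim M=3$ the horizontal rank cannot exceed $3$, so $t_4=3$ is maximal and forces $t_5=t_4=3$: no new invariant can appear at order $5$. Together with $s_1=s_2=\cdots=0$ this produces, via Theorem~\ref{thm:karlhede}, the rank sequence $(r_p)=(2,3,4,5,6,6)$ of Proposition~\ref{prop:pseudostab}, whence $q=5$ and the metric is classified by the fifth covariant derivative $\hR^{(5)}$, i.e.\ by fifth-order invariants.

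I expect the main obstacle to be the functional-independence step: one must verify that the $3\times 3$ determinant above is not identically zero on the solution family and identify its zero set, since solutions lying there have $t_4<3$ and hence strictly lower order. Controlling the $P_2$-term, which is fixed by the free datum of the Cartan realization of Proposition~\ref{prop:DCH1inv}, is exactly what converts the qualitative word \emph{generically} into the explicit inequality \eqref{eq:Fu-maxorder-cond}.
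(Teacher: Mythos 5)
Your proof is correct and follows essentially the same route as the paper's: the paper likewise observes that $\epsilon$, $P=D\epsilon$ and $P_1=D^2\epsilon$ are 2nd, 3rd and 4th order invariants whose generic functional independence yields the rank sequence \eqref{eq:CH1maxorder}. You go further than the paper's very terse proof by making ``generically'' explicit as the non-vanishing of a determinant affine in $P_2$ (whose leading coefficient is, in the paper's later notation, the invariant $J$), a quantitative refinement the paper defers to the coordinate computation in Section~\ref{sec:3Dmax}.
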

\begin{proof}
  For generic solutions of \eqref{eq:dom0}-\eqref{eq:depsilon0},
  $\epsilon, P=D\epsilon, P_1=D^2\epsilon$ are functionally
  independent.  We already observed that $\epsilon$ is a 2nd order
  invariant.  Hence $P,P_1$ are a 3rd and a 4th order invariant,
  respectively.  Generically, these will be functionally independent,
  and therefore, the rank sequence is as shown in
  \eqref{eq:CH1maxorder}.
\end{proof}

\section{Three-dimensional metrics of maximal order}\label{sec:3Dmax}
In this section, we prove Theorem \ref{thm:main}.  Throughout, we
assume full rank regularity and curvature regularity.  By
Propositions \ref{prop:NCH1} - \ref{prop:DCH1}, all 4th order
metrics are necessarily type D and properly $\CH_1$. By Propositions
\ref{prop:DCH1inv} and \ref{prop:DCH1q5} such a geometry satisfies
\eqref{eq:dom0}-\eqref{eq:depsilon0} and
\begin{align}
  \label{eq:invindep}
  &\d\epsilon\wedge \d P \wedge \d P_1 \neq 0,\quad C+2T^2\neq
  0,\intertext{where}
  \label{eq:defPP1}
  &P:= D\epsilon,\quad
  P_1 := DP = D^2 \epsilon.
\end{align}
\noindent
We complete the proof of the main Theorem \ref{thm:main} by
integrating \eqref{eq:dom0}-\eqref{eq:depsilon0} subject to the
constraints \eqref{eq:invindep}.

First assume $T\neq 0$. To integrate \eqref{eq:dom0} we introduce an
integrating factor:
\[ \d (e^{-2Tw} \omega^0) =  0. \]
Hence,
\begin{align}
  \label{eq:om0du}
  \omega^0 &= e^{2Tw} \d u,\\
  \label{eq:om1dw}
  \omega^1 &= 2\d w +4 T x \d u,
\end{align}
for some functions $u,x,w$. Next, \eqref{eq:dom1} gives
\[ \d\omega^1 - 4T
  \omega^2 \wedge \omega^0 = 4T(e^{-2 T w}dx - \omega^2)\wedge
  \omega^0 = 0 ,\]
with general solution
\begin{align}
  \label{eq:om2dx}
  \omega^2 &= e^{-2Tw} (\d x + a \d u) .
\end{align}
Since $\omega^0, \omega^1,\omega^2$ are linearly independent, $u,w,x$
form a system of coordinates, and $a$ is some, as yet
undetermined, function of $u,w,x$.  Solving \eqref{eq:om0du}
\eqref{eq:om1dw} \eqref{eq:om2dx} gives
\begin{align*}
  \d u&= e^{-2Tw} \omega^0,\\
  \d w &= \frac{1}{2} \omega^1 - 2T x\, e^{-2 Tw}\omega^0,\\
  \d x &= e^{2Tw} \omega^2 -a\, e^{-2Tw} \omega^0.
\end{align*}
By
\eqref{eq:depsilon0}, we have
\begin{equation}
  \d(e^{2Tw} \epsilon - C x)\wedge \omega ^0 = 0.
\end{equation}
Hence,
\begin{equation}
  \label{eq:epsilonfu}
  \epsilon = e^{-2Tw}\left(Cx + f(u)\right),
\end{equation}
for some univariate function $f(u)$.  Taking the exterior derivative
of \eqref{eq:om2dx} and using \eqref{eq:dom2} gives
\begin{equation}\label{eq:da}
  \left\{ \d a+ 2 e^{4Tw} \d w + 2((C-2T^2)x + f(u))\d x\right\}\wedge \d u =
  0.
\end{equation}
Making the substitution
\begin{align}
  &a=a_1+ \frac{1-e^{4Tw}}{2T}+ x^2(2T^2-C) - 2x f(u) ,\intertext{gives}
  &\d a_1\wedge \d u = 0.
\end{align}
Therefore, for $T\neq 0$ the general solution of
\eqref{eq:dom0}-\eqref{eq:depsilon0} is given by \eqref{eq:om0du},
\eqref{eq:om1dw}, \eqref{eq:om2dx} and
\begin{equation}
  \label{eq:aff1}
  a=\frac{1-e^{4Tw}}{2T}+ x^2(2T^2-C) - 2x f(u)+f_1(u),
\end{equation}
where $f(u), f_1(u)$ are freely chosen functions.
This solution form is invariant with respect to the following
transformations:
\begin{align}
  &u = \phi(U),\quad w = W- \frac{\log \phi'}{2T},\quad  x =
  \frac{X}{\phi'} + \frac{\phi''}{4T^2(\phi')^2},\\
  \label{eq:fugeneric}
  &f(u) =  \frac{F(U)}{\phi'} - \frac{C \phi''}{4T^2(\phi')^2},\\
  &f_1(u) = \frac{F_1(U)}{(\phi')^2}+ \frac{2 F(U) \phi'' - \phi'''}{4
    T^2 (\phi')^3}+ \frac{(6T^2-C) (\phi'')^2}{16 T^4 (\phi')^4} +
  \frac{1-(\phi')^2}{2T (\phi')^2},
\end{align}
where $\phi(U)$ is an arbitrary strictly increasing function
($\phi'(U)>0$ everywhere).

If $T=0$ then one verifies that \eqref{eq:dom0}-\eqref{eq:depsilon0}
is still equivalent to \eqref{eq:om0du}-\eqref{eq:da}. Moreover, if
$(1-e^{4Tw})/(2T)$ is interpreted in the limit sense as being equal to
$-2w$, \eqref{eq:aff1} remains valid.  The form-preserving
transformations are now
\begin{align}
  &u = U + U_0,\quad w = W+W_0,\quad x = X+ \phi(U),\\
  \label{eq:fuC0}
  & f(u) = F(U) - C \phi,\quad f_1(u) = F_1(U) + 2F(U)\phi - C\phi^2 -
  \phi' + 2W_0,
\end{align}
where $U_0, W_0$ are constants and $\phi(U)$ is an arbitrary
function.



It follows by \eqref{eq:fugeneric} and \eqref{eq:fuC0} that if $C\neq
0$, then one can normalize the above solution form by transforming
$f(u) \to 0$ identically.  If $C=0$ then $T\neq 0$ by assumption, and
hence by \eqref{eq:epsilonfu} and \eqref{eq:fugeneric} one can
normalize the solution form by transforming $f(u) \to
2T^2$. Evaluating $\frac{1}{2} (\omega^1)^2 - 2 \omega^0 \omega^2 $
gives the metric in \eqref{eq:ds1}.  Finally, a straightforward
calculation relative to this metric form shows that the maximal order
condition \eqref{eq:invindep} is equivalent to
\eqref{eq:Fu-maxorder-cond}.

The above maximal order metrics are invariantly classified by the
invariant scalars $C,T$ and by the following  Cartan invariants of
orders $2,3,4,5$, respectively:
\[ \epsilon, P= D\epsilon, P_1 = D^2\epsilon, P_2 = D^3 \epsilon.\]
If $C\neq 0$, it is convenient to introduce the invariants
\begin{align*}
  \label{eq:Jdef}
  J &:= \delta \epsilon \Delta P - \delta P \Delta \epsilon=2CTP-2T(C+2T^2)\epsilon^2-C^2,\\
  J_1 &:= D \epsilon \Delta P - D P \Delta \epsilon=-CP_1+2(C+2T^2)\epsilon P,\\
  J_2 &:= D \epsilon \Delta J_1 - D J_1 \Delta
  \epsilon\\
  &=C^2P_2-2C\epsilon(C+2T^2)P_1-2C(C+6T^2)P^2+4T[C^2+2T(C+2T^2)\epsilon^2]P.\nonumber
\end{align*}
Hence, the invariants $J_1$ and $J_2$ have
order 4 and 5 respectively. If $T=0$ then $J=-C^2$ is constant. In
the generic case $CT\neq 0$, and in the light of \eqref{eq:dP} and
analogous structure equations for $dJ$, the invariant $J$ is
non-constant and of order 3.
Explicit calculations relative to the metric form \eqref{eq:ds1}
show that
\begin{align*}
  &\epsilon = C e^{-2 Tw} x,\\
  &J = - C^2 e^{-4Tw}(1+2 T F(u)),\\
  &A:=\frac{(CJ_1+4T\epsilon J)^2}{J^3}  = - \frac{(F'(u))^2}{(1+2
    T F(u))^{3}},\\
  &B:=\frac{CJ_2+20C J_1 T^2 \epsilon}{J^2} + \frac{48 T^3\epsilon^2}{J} =
  - \frac{F''(u)}{(1+2 T F(u))^2}.
\end{align*}
The latter two invariants have order 4 and 5, respectively.  The
metric is classified by the functional relationship between these
invariants.  Observe that the maximal order
condition is $B\neq 3TA$.

If $C=0\neq T$ we define dimensionless invariants of order 3, 4 and
5:
\[
p:=P/\epsilon^2,\qquad p_1:=-P_1/\epsilon^3,\qquad
p_2:=P_2/\epsilon^4.
\]
Explicit calculations relative to \eqref{eq:ds1} now give
\begin{align*}
  &\epsilon = 2 T^2 e^{-2Tw}, \qquad p=2x,\\
  &U:=p_1+\frac{3p^2}{2}+2T^2\left(p+\frac{T}{\epsilon^2}\right)=\frac{F(u)}{T^2} + \frac{1}{2T^3},\\
  &V:=p_2+2(3p+1)p_1+6p^3+4p\left(p+\frac{T}{\epsilon^2}\right)=-\frac{F'(u)}{2T^4}.
\end{align*}
Hence, as above, the metric is classified by the functional
relationship between a 4th and a 5th order invariant. The maximal
order condition is $V\neq0$.

\section{Conclusions and discussion}
\label{sec:discussion} In this article we have demonstrated that
3-dimensional Lorentzian metrics may require 5th order differential
invariants for their invariant classification. The class of maximal
order metrics consists of a single, well-defined family of $\CH_1$
solutions governed by a unified set of structure equations.  This
echoes a similar result in 4-dimensional Lorentzian
geometry~\cite{milpel08}, although there the possibility of
pseudo-stable geometries of maximal order was left open.

Previously, 3-dimensional Lorentzian $\CH_1$ metrics were studied in
detail by Bueken and Djoric~\cite{BD00}.  They
 already proved Proposition \ref{prop:DZCH1} and obtained the
metrics covered by Propositions \ref{prop:NCH1} and \ref{prop:DCH1},
albeit not in closed form but up to solving partial differential
equations. The coordinate forms in \cite{BD00} are therefore less
convenient for invariant classification and the discussion of the
order, whereas our work was more directly related to Cartan
invariants. Even though our focus here was on type D metrics of
maximal order, the type N, 3rd order $\CH_1$ geometries from
Proposition \ref{prop:NCH1} also constitute an interesting class
governed by a well-defined set of structure equations. A closed form
for these metrics can be derived along the same lines as in the type
D case outlined above, but we do not pursue this here.

In \cite{cps10} it was proved that the unique TMG solution of type D
(dubbed type D$_s$ there, cf.\ table \ref{tab:petrovtype} of
appendix C) is the homogeneous, biaxially spacelike-squashed AdS$_3$
metric family; this is the unique solution corresponding to the
proof of Proposition \ref{prop:DCH0}. Type D NMG solutions with
constant scalar curvature were fully classified in \cite{aa11a} and
are also homogeneous.
Hence, the metrics of Theorem \ref{thm:main} are not TMG nor NMG
solutions.  Therefore, our conclusion is  that
\begin{quote}
  \em at most four covariant derivatives of the Riemann tensor are
  needed to invariant classify exact TMG and NMG solutions locally.
\end{quote}
 In future work, we want to sharpen this result.
Hereby, the technique we have followed in this paper to prove
Propositions \ref{prop:NCH1}-\ref{prop:DCH1}
not only provides a robust mechanism to invariantly characterize
solutions, but also allows one to find new solutions, beyond the
curvature homogeneity assumption. A first step, however, would be to
classify all curvature homogeneous TMG and NMG solutions, in order
to see whether the bound $q-1\leq 3$ for the TMG and NMG
gravitational theories is sharp.


Finally, the same argument given for Proposition
\ref{prop:pseudostab} holds for Riemannian geometry as well.
However, for Euclidean signature only the equivalent of type DZ
curvature is possible and this suffices to rule out 4th order
Riemannian metrics.  However 3rd order, 3-dimensional order
Riemannian metrics are possible. We will report on this fact
elsewhere.

\section*{Acknowledgments}

RM was supported by an NSERC discovery grant.  He thanks the
Mathematical Institute of Utrecht University for its hospitality
during a research visit. LW was supported by a BOF research grant of
Ghent University, an FWO mobility grant No V4.356.10N to Utrecht
University and an Yggdrasil mobility grant No 211109 to University
of Stavanger while parts of this work were performed. He thanks the
Department of Mathematics and Statistics of Dalhousie University for
its hospitality during a research stay.

\appendix
\section{The three-dimensional formalism}\label{app:NPformalism}
Several three-dimensional NP-like formalisms, with different symbol
choices, have been proposed in the context of exact solutions to
topologically massive gravity \cite{hmp87,an95}.  Our choice of
symbols is close to \cite{an95}, but differs slightly in the choice
of normalization because we attempted to satisfy the following
criteria:
\begin{itemize}
\item our 3-dimensional formalism is obtainable as a straightforward reduction
  of the usual 4-dimensional NP formalism~\cite[Chapter 7]{ES};
\item our rule for passing from vector to spinor indices is very
  simple and does not involve normalizing factors;
\item the relation between the curvature spinor and the Ricci tensor
  takes a particularly simple form; see  equation \eqref{eq:SabPsi}.
\end{itemize}

Let $(U,\varepsilon)$ be a 2-dimensional symplectic, real vector
space. The group of symplectic automorphisms is isomorphic to
$\SL_2\Rset$.  The vector space $V=\S^2U$ carries the natural
structure of a Lorentzian inner product space with the inner product
given by $\eta=-\varepsilon^2$.  Henceforth, we regard $U$ as the
space of spinors and $V$ as the space of vectors.  The group $O(\eta)$
is isomorphic to $\SO(1,2)$; the group morphism $\SL_2\Rset\to
\SO(1,2)$ gives the double cover of vectors by spinors.

To facilitate frame calculations, we introduce a normalized spinor
dyad $\bo, \biota$:
\begin{eqnarray*}
&&\varepsilon_{01}\equiv \varepsilon_{AB} o^A \iota^B \equiv  o_A
\iota^A = -\iota_A o^A \equiv -\varepsilon_{10} = 1,\\
&&\varepsilon_{00}\equiv o_A o^A =0,\quad
\varepsilon_{11}\equiv\iota_A\iota^A =0,
\end{eqnarray*}
 where the dyad indices $A,B,\ldots$ take values 0 or 1.
Associated to this dyad, we define a null vector triad  by
\begin{equation}\label{eq:dyad_triad}
\be_0 = \bell = \bo^2,\quad \be_1=\bm = \bo \biota
\equiv\frac{1}{2}\left(\bo\otimes \biota + \biota \otimes
\bo\right),\quad \be_2=\bn = \biota^2,
\end{equation}
where the triad indices $a,b,c=0,1,2$ are doublets of symmetrized
dyad indices:
\[ 0 \mapsto (00),\quad  1\mapsto (01),\quad 2\mapsto (11).\]
In this way, we have
\begin{gather}
  \eta_{ab} = \eta_{(A_1 A_2) (B_1 B_2)} = -\frac{1}{2}(\varepsilon_{A_1 B_1}
  \varepsilon_{A_2
    B_2}+\varepsilon_{A_1 B_2} \varepsilon_{A_2 B_1}),\\
  \eta_{02} = \eta_{20} = -1,\quad \eta_{11} = 1/2,
\end{gather}
with all other components zero.  Equivalently, \[ \eta_{ab} \ell^a
n^b = \ell_a n^a = -1,\quad m_a m^a = 1/2,\] with all other inner
products equal to zero.

Next, let $(M,g)$ be a 3-dimensional Lorentzian manifold. A null
triad at $x\in M$ is an isomorphism $(V,\eta)\to (T_x M, g_x)$. A
moving $\eta$-frame is a null triad at every $x\in O$ for some open
neighbourhood $O\subset M$.  Equivalently, a null triad is a
collection of vector fields $\bell, \bm, \bn$  that satisfy the
relations
\[ g(\bell,\bn) = -1,\quad g(\bm,\bm) = 1/2,\] with all other
inner products zero.  In other words, taking $(\be_0,\be_1,\be_2) =
(\bell,\bm,\bn)$ gives 
\begin{equation}
\label{eq:gab}
  (g_{ab}) =(\eta_{ab})=
\begin{pmatrix}
  0 &  0 & -1\\ 0 & 1/2 & 0\\ -1 &
  0 & 0
\end{pmatrix},\qquad
 (g^{ab}) =(\eta^{ab})=
\begin{pmatrix}
  0 &  0 & -1\\ 0 & 2 & 0\\ -1 &
  0 & 0
\end{pmatrix}.
\end{equation}

In introducing symbols for the connection scalars, we wish to adapt
the notation of the familiar four-dimensional NP formalism.  To do so,
it is convenient to regard the manifold $M$ as a totally geodesic
embedding (all geodesics in the submanifold are also geodesics of the
surrounding manifold) $\phi: M \hookrightarrow \hM$ in a
4-dimensional Lorentzian manifold $(\hM,\hat{g})$. This is equivalent
to the condition that $M$ be autoparallel, i.e., that the covariant
derivative operator is closed with respect to vector fields that are
tangent to $M$ \cite[Chapter 7, Sect.  8]{kobnam}.

Recall that a null tetrad framing on $\hM$ is a basis of vector
fields $(\hat\bm,\hat\bm^*,\hat\bn,\hat\bell)$ such that
\[ \hat{g}(\hat\bm,\hat\bm^*) = 1,\quad \hat{g}(\hat\bell,\hat\bn) =
-1, \] with all other cross-products equal to zero. Here $\hat\bell,
\hat\bn$ are real whereas $\hat\bm, \hat\bm^*$ are complex
conjugates. We relate the null tetrad on $\hM$ to the null triad
on $M$ by setting
\begin{equation}
  \label{eq:tetradpushforward}
  \phi_*\bell = \hat{\bell},\quad \phi_*\bm = \Re\hat{\bm} =
  (\hat\bm+\hat\bm^*)/2,\quad \phi_*\bn = \hat{\bn}.
\end{equation}
Let $\hat{\bomega}^i,\; i=1,2,3,4$ and $\hat{\bGamma}_{ij}$ denote the
dual coframe and the connection 1-form on $\hM$. Let
\[ \tbomega^i = \phi^* \hat{\bomega}^i,\quad \tbGamma_{ij} =
\phi^*\hat{\bGamma}_{ij} \] denote the corresponding pullbacks to
$M$.  Henceforth, we use a tilde decoration to denote the pullback
of objects from $\hM$ to $M$. The pullback imposes the
condition:
\begin{equation}
  \label{eq:tbomega12}
  \tbomega^1=\tbomega^2.
\end{equation}
The embedding of $M$ into $\hM$ induces an inclusion of the three-dimensional
Lorentz group $\SO(1,2)$ into $\SO(1,3)$, the four-dimensional Lorentz group.  The
condition that $M$ be autoparallel is equivalent to the condition
that the pull-back of the connection 1-form take values in the
subalgebra $\so(1,2)$.  This imposes the following conditions on the
pullback of the connection 1-form:
\[ \tbGamma_{14} = \tbGamma_{24},\quad \tbGamma_{23} =
\tbGamma_{13},\quad \tbGamma_{12}=0 .\] Using the notation of
\cite[Section 7.2]{ES}, the corresponding condition on the NP
connection scalars is:
\begin{gather}
  \label{eq:imspincoeffzero1}
  \Im \tilde{\kappa} = \Im\tilde{\tau} =
  \Im\tilde{\epsilon}=\Im\tilde\gamma = \Im\tilde{\pi} =\Im\tilde\nu =
  0,\\
    \label{eq:imspincoeffzero2}
    \Im(\tilde\sigma+\tilde\rho) =\Im(\tilde\alpha+\tilde\beta) =
  \Im(\tilde\lambda + \tilde\mu) = 0.
\end{gather}
Taking into account the
difference in the ordering of the three-dimensional and the four-dimensional indices, we arrive at
the following notation for the three-dimensional connection 1-form and scalars:
\begin{gather}
  \bomega^0 = \tbomega^4,\quad \bomega^1 = 2\tbomega^1 =
  2\tbomega^2,\quad \bomega^2 = \tbomega^3;\\
  \label{eq:Gammamatrix}
  \left(\bGamma^a{}_{b}\right) =
  \begin{pmatrix}
   \bGamma_{02} & \bGamma_{12} & 0 \\
   -2 \bGamma_{01}  & 0 & 2 \bGamma_{12} \\
   0 & -\bGamma_{01} &  -\bGamma_{02}
 \end{pmatrix},
\end{gather}

\begin{align}
  \bGamma_{01} = -\tbGamma_{14} &= \kappa \bomega^0 + \sigma\bomega^1
  + \tau\bomega^2,\\
  \bGamma_{02} = -\tbGamma_{34} &= 2\left(\epsilon\bomega^0 +
    \alpha\bomega^1+
    \gamma\bomega^2\right) ,\\
  \bGamma_{12} = \hphantom{-}\tbGamma_{23} &= \pi\bomega^0 +
  \lambda\bomega^1+\nu\bomega^2;
\end{align}
\begin{align}
  \label{eq:kappatkappa}
  & \kappa = \tilde\kappa ,&& \tau = \tilde\tau ,&& \sigma =
  (\tilde\sigma+\tilde\rho)/2 ,\\
  & \pi = \tilde\pi,&& \nu = \tilde\nu ,&& \lambda =
  (\tilde\lambda+\tilde\mu)/2 ,\\
  \label{eq:epsilontepsilon}
  & \epsilon = \tilde\epsilon,&& \gamma =
  \tilde\gamma ,&& \alpha =
  (\tilde\alpha+\tilde\beta)/2.
\end{align}
Writing
\begin{equation}
  \label{eq:DdeltaDeltadef} D = \ell^a \nabla_a,\quad \delta =
  m^a\nabla_a,\quad \Delta = n^a\nabla_a,
\end{equation}
we have by \eqref{eq:tetradpushforward}:
\begin{equation}
  \label{eq:3d4dpushforward}
  D\tilde\psi = \phi^*(\hat{D}\hat\psi),\quad
  \delta\tilde\psi = \phi^*(\hat\delta\hat\psi+\hat\delta^*\hat\psi)/2,\quad
  \Delta\tilde\psi = \phi^*\hat\Delta\hat\psi,
\end{equation}
where $\hat\psi$ is a scalar defined on $\hM$ and
$\tilde\psi=\phi^*\hat\psi$ is its pullback to $M$.  The three-dimensional
commutator relations can now be expressed as
\begin{align}
  \label{eq:commutator1}
  D\delta - \delta D &= (\pi-2\alpha)D +2\sigma \delta-\kappa\Delta,\\
  \label{eq:commutator2}
  D\Delta - \Delta D &= -2\gamma D +2 (\tau+\pi) \delta -2\epsilon \Delta,\\
  \label{eq:commutator3}
  \delta \Delta-\Delta \delta &= -\nu D +2\lambda \delta +(\tau-2\alpha)
  \Delta.
\end{align}
The above equations follow in a straightforward manner by applying
symbol rules \eqref{eq:kappatkappa}-\eqref{eq:epsilontepsilon},
\eqref{eq:3d4dpushforward} to the usual four-dimensional commutator relations, as
shown for example in equations (7.6a)-(7.6c) of \cite{ES}.

The three-dimensional curvature tensor $R_{abcd}$ decomposes into a curvature
scalar
\begin{equation}\label{eq:defRab}
R \equiv R^a{}_a,\quad R_{ab}\equiv R^c{}_{acb},
\end{equation}
and a trace-free part
\begin{equation}\label{eq:defSab}
S_{ab} \equiv R_{ab}- \frac{1}{3}R g_{ab} ,
\end{equation}
 according to
\[ R_{abcd} =(S_{ac} g_{bd}+ S_{bd} g_{ac}-S_{bc} g_{ad} -
S_{ad}g_{bc}) + \frac{1}{6} R (g_{ac} g_{bd} - g_{ad} g_{bc}). \] The
image of the natural inclusion $\S^4 U \hookrightarrow \S^2 V$ is the
5-dimensional vector space of trace-free, symmetric tensors.
Therefore, the trace-free part of a three-dimensional curvature tensor can be
represented by means of a rank-4, symmetric curvature spinor:
\[ \Psi_{ABCD} = (\Psi_0 \biota^4 + 4\Psi_1 \biota^3 \bo + 6 \Psi_2
\biota^2 \bo^2 + 4 \Psi_3 \biota \bo^3 + \Psi_4 \bo^4)_{(ABCD)}. \]
In this way, the definition of the curvature scalars $\Psi_0,
\Psi_1, \Psi_2, \Psi_3, \Psi_4$ is formally identical to their
four-dimensional counterparts; c.f.\ \cite[Equation (3.76)]{ES}.  We
obtain the following representation of the trace-free part of the
Ricci tensor and the curvature-two form:
\begin{gather}
  \label{eq:SabPsi}
  (S_{ab}) =
\begin{pmatrix}
  \Psi_0 &  \Psi_1 & \Psi_2\\ \Psi_1 & \Psi_2 & \Psi_3\\ \Psi_2 &
  \Psi_3 & \Psi_4
\end{pmatrix};\\
 \left(\bOmega^a{}_{b}\right) =
 \begin{pmatrix}
   \bOmega_{02} & \bOmega_{12} & 0 \\
   -2 \bOmega_{01}  & 0 & 2 \bOmega_{12} \\
   0 & -\bOmega_{01} &  -\bOmega_{02}
 \end{pmatrix},
\end{gather}
\begin{align}
  \label{eq:O01}
  \bOmega_{01} &= \frac{1}{2} \Psi_0\, \bomega^0\wedge \bomega^1 +
  \Psi_1\, \bomega^0\wedge \bomega^2 +\left(\Psi_2/2+R/12\right)
  \bomega^1\wedge \bomega^2,\\
  \label{eq:O02}
  \bOmega_{02} &= \Psi_1\, \bomega^0\wedge \bomega^1 +
  \left(2\Psi_2-R/6\right)\,
  \bomega^0\wedge \bomega^2 +\Psi_3 \bomega^1\wedge \bomega^2,\\
  \label{eq:O03}
  \bOmega_{12} &= \left(\Psi_2/2+R/12\right) \bomega^0\wedge \bomega^1
  + \Psi_3\, \bomega^0\wedge \bomega^2 +\frac{1}{2}\Psi_4\,
  \bomega^1\wedge \bomega^2.
\end{align}

The three-dimensional curvature 2-form and curvature scalars are related to their four-dimensional
counterparts as follows:
\begin{align}
  &\bOmega_{01} = -\tilde{\bOmega}_{14},\quad \bOmega_{02} =
  -\tilde{\bOmega}_{34},\quad \bOmega_{12} =
  \tilde{\bOmega}_{23},\quad \tilde{\bOmega}_{12} = 0;\\
  \label{eq:Psi0tPsi0}
  &\Psi_0 = \tPsi_0 + \tPhi_{00},\\
  &\Psi_1 = \tPsi_1 + \tPhi_{01},\\
  &\Psi_2 = \tPsi_2 + \tPhi_{02}/3+2/3\, \tPhi_{11},\\
  &\Psi_3 = \tPsi_3 + \tPhi_{12},\\
  &\Psi_4 = \tPsi_4 + \tPhi_{22},\\
  \label{eq:RtR}
  &R = \tilde{R}/2  +4\tPhi_{02}-4\tPhi_{11},
\end{align}
with the right-hand sides of the above equations all real, as a
consequence of equations \eqref{eq:imspincoeffzero1} and
\eqref{eq:imspincoeffzero2}.

The three-dimensional version of the NP equations, or equivalently Cartan's second
structure equations, take the form shown below.  Using equations
\eqref{eq:kappatkappa}-\eqref{eq:epsilontepsilon},
\eqref{eq:3d4dpushforward}, \eqref{eq:Psi0tPsi0}-\eqref{eq:RtR},  it
is straightforward to convert the four-dimensional Newman-Penrose equations into
their three-dimensional counterparts. For example, the NP equations (7.21a) and
(7.21b) of \cite{ES} read
\begin{align*}
  &D\rho-\delta^*\kappa =
  \rho^2+\sigma\sigma^*+(\epsilon+\epsilon^*)\rho - \kappa^*\tau -
  \kappa(3\alpha+\beta^*-\pi) +\Phi_{00},\\
  &D\sigma-\delta\kappa =
  (\rho+\rho^*)\sigma+(3\epsilon-\epsilon^*)\sigma -
  (\tau-\pi^*+\alpha^*+3\beta)\kappa +\Psi_{0}.
\end{align*}
Note that all of the symbols in the above two equations should have
hats, but we omit the decoration for the sake of simplicity.  Taking
the average of these two equations, pulling back and using
\eqref{eq:kappatkappa}-\eqref{eq:epsilontepsilon},
\eqref{eq:3d4dpushforward}, \eqref{eq:Psi0tPsi0}-\eqref{eq:RtR}
gives equation \eqref{eq:NP1} below. The rest of the
three-dimensional structure equations are obtained via the same
reduction procedure.
\begin{align}
  \label{eq:NP1}
  D\sigma -\delta\kappa&=(\pi-4\alpha-\tau)\kappa +
  2(\epsilon+\sigma)\sigma + \Psi_0/2,\\
  \label{eq:NP2}
  D\tau -\Delta\kappa&=-4\gamma \kappa+2( \tau+ \pi)
  \sigma +\Psi_1,\\
  \label{eq:NP3}
  D\alpha-\delta\epsilon &=2(\sigma-\epsilon)\alpha +
  (\epsilon+\sigma)\pi - (\gamma+ \lambda) \kappa
  +\Psi_1/2,\\
  \label{eq:NP4}
  \delta\tau-\Delta\sigma&=2(\lambda-\gamma) \sigma -\kappa
  \nu +\tau ^2+\Psi_2/2+R/12,\\
  \label{eq:NP5}
  D\gamma-\Delta\epsilon &=2 ( \tau+ \pi) \alpha +\pi \tau -4 \gamma
  \epsilon -\kappa \nu +\Psi_2-R/12,\\
  \label{eq:NP6}
  D\lambda-\delta\pi&=2(\sigma-\epsilon) \lambda -\kappa \nu
  +\pi^2+\Psi_2/2+R/12,\\
  \label{eq:NP7}
  \delta\gamma-\Delta\alpha &= 2(\lambda-\gamma)\alpha +
  (\lambda+\gamma)\tau -(\epsilon+\sigma)\nu
  + \Psi_3/2,\\
  \label{eq:NP8}
  D\nu-\Delta\pi &= -4\epsilon\,\nu + 2(\tau+\pi)\lambda +\Psi_3,\\
  \label{eq:NP9}
  \delta\nu-\Delta\lambda &= (\tau-4\alpha-\pi)\nu +
  2(\gamma+\lambda)\lambda + \Psi_4/2.
\end{align}
Likewise, the differential Bianchi equations are obtained by
averaging the four-dimensional Bianchi equations and applying
equations \eqref{eq:kappatkappa}-\eqref{eq:epsilontepsilon},
\eqref{eq:3d4dpushforward}, \eqref{eq:Psi0tPsi0}-\eqref{eq:RtR}.
They are:
\begin{align}
  \label{eq:bianchi1}
  &\Delta\Psi_0/2-\delta\Psi_1 + D\left(\Psi_2/2+R/12\right) =\\  \nonumber
  &\qquad   (2\gamma-\lambda)
  \Psi_0+(\pi-2\alpha-2\tau) \Psi_1 + 3\sigma \Psi_2 - \kappa \Psi_3,\\
  \label{eq:bianchi2}
  &\Delta \Psi_1/2- \delta(\Psi_2-R/12)+D\Psi_3/2=\\ \nonumber
  &\qquad \nu \Psi_0/2+(\gamma-2\lambda) \Psi_1+(3/2)(\pi-\tau)
  \Psi_2+(2\sigma-\epsilon)\Psi_3 - \kappa \Psi_4/2,\\
  \label{eq:bianchi3}
  &\Delta(\Psi_2/2+ R/12)-\delta\Psi_3+D\Psi_4/2=\\
  &\qquad \nu \Psi_1-3\lambda \Psi_2 + (2\pi+2\alpha-\tau)\Psi_3 +
  (\sigma-2\epsilon)\Psi_4.\nonumber
\end{align}

\section{Lorentz Transformations}\label{app:Lorentz-trafos}
There are 3 different types of three-dimensional Lorentz
transformations: boosts, spins, and null rotations.  Each such
transformation has a simple description as a transformation of
spinor space, i.e., as an element of $\SL_2\Rset$.  Infinitesimally,
boosts have non-zero, real eigenvalues, spins have imaginary
eigenvalues, and null rotations have zero eigenvalues (in other
words, an infinitesimal null rotation is a nilpotent transformation
of spinor space).  To facilitate calculations, we represent these
transformations in a natural spinor dyad, and present their induced
action on a suitable associated vector triad and on the
corresponding connection and curvature scalars. Consistent with our
philosophy of concordance between the three-dimensional and
four-dimensional formalisms, all of the above equations are
straightforward reductions of the four-dimensional transformation
laws; c.f.\ \cite[Appendix B]{stewart}.

A boost transformation corresponds to a real-diagonalizable element
of $\SL_2\Rset$.  The corresponding spinor and vector actions are
\begin{gather}
  \label{eq:boost0}
  \bo' = a^{1/2} \bo,\quad \biota' = a^{-1/2} \biota,\quad a
  >0,\\ \label{eq:boost_vector} \bell' = a \bell,\quad \bm' =
  \bm,\quad \bn' = a^{-1} \bn.
\end{gather}
Boost transformations can also be realized as the 1-dimensional
group of symmetries of the type D curvature spinor; c.f.\ line 6 of
Table \ref{tab:petrovtype}. The associated connection and curvature
transformation laws are shown below.
\begin{align}
  \label{eq:boost1}
  &\tau' = \tau,\\
  &\pi' = \pi,\\
  &\sigma' = a \sigma,\\
  &\lambda' = a^{-1} \lambda,\\
  \label{eq:boost5}
  &\kappa' = a^2\kappa,\\
  \label{eq:boost6}
  &\nu' = a^{-2} \nu,\\
  &\epsilon' = a\epsilon  + Da/2,\\
  &\alpha' = \alpha + a^{-1}\delta a/2,\\
  &\gamma' = a^{-1} \gamma + a^{-2} \Delta a/2,
\end{align}
\begin{align}
  &\Psi_0' = a^2\Psi_0,\\
  &\Psi_1' = a \Psi_1,\\
  &\Psi_2' = \Psi_2,\\
  &\Psi_3' = a^{-1} \Psi_3,\\
  \label{eq:boost14}
  &\Psi_4' = a^{-2} \Psi_4.
\end{align}

A null rotation
corresponds to a unipotent, non-diagonalizable element of
$\SL_2\Rset$.  The corresponding spinor and vector actions are
\begin{gather}
  \label{eq:nullrot}
  \bo' = \bo,\quad \biota' = \biota + x\bo,\\
  \label{eq:nullrot_vector} \bell' = \bell
  ,\quad \bm' = \bm + x \bell,\quad \bn' = \bn + 2x \bm + x^2 \bell.
\end{gather}
Null rotations can also be realized as the 1-dimensional group of
symmetries of the type N curvature spinor; c.f.\ line 9 of Table
\ref{tab:petrovtype}. The associated transformation laws for the
connection and curvature scalars are shown below.
\begin{align}
  \label{eq:nr1}
  &\kappa'= \kappa ,\\
  \label{eq:nr2}
  &\sigma'= \sigma +x \kappa ,\\
  \label{eq:nr3}
  &\epsilon'= \epsilon +x \kappa ,\\
  \label{eq:nr4}
  &\tau'=\tau +2 x \sigma+x^2 \kappa ,\\
  \label{eq:nr5}
  &\alpha'=  \alpha +x (\epsilon + \sigma)+x^2 \kappa,\\
  \label{eq:nr6}
  &\gamma'= \gamma +x(2\alpha +\tau)+x^2 (\epsilon +2\sigma)+x^3 \kappa ,\\
  \label{eq:nr7}
  &\pi'= \pi+Dx +2 x \epsilon+x^2 \kappa ,\\
  &\lambda'= \lambda +\delta x+x(2\alpha+\pi+
  Dx)+x^2(2\epsilon+\sigma)+x^3\kappa,\\
  &\nu'= \nu+\Delta x+2x(\gamma+\lambda+\delta x)+\\ \nonumber &\qquad
  +x^2(4\alpha +\tau
  +\pi+Dx)+2 x^3 (\epsilon+\sigma)+x^4\kappa  ,
\end{align}
\begin{align}
  &\Psi_0'= \Psi_0,\\
  &\Psi_1'= \Psi_1+x \Psi_0,\\
  &\Psi_2'=\Psi_2+2 x \Psi_1+x^2 \Psi_0, \\
  &\Psi_3'= \Psi_3+3 x \Psi_2+3 x^2 \Psi_1+x^3 \Psi_0,\\
  \label{eq:nr15}
  &\Psi_4'=\Psi_4+4 x \Psi_3+6 x^2 \Psi_2+4 x^3 \Psi_1+ x^4 \Psi_0.
\end{align}

A spin transformation corresponds to an element of $\SL_2\Rset$ with
imaginary eigenvalues.
As such, we have
\begin{align}
  \label{eq:spin}
  &\bo'\pm i \biota' = e^{\mp it/2} (\bo \pm i \biota),\\ \nonumber
  &(\bell+\bn)' = \LHS,\\
  &(\bell-\bn\pm 2i\bm)' = e^{\mp i t} \LHS.
\end{align}
Spin transformations can also be realized as the 1-parameter group
of symmetries of the type DZ curvature spinor; cf line 7 of Table
\ref{tab:petrovtype}. The associated connection and curvature
transformation laws are shown below.

\begin{align}
  \label{eq:spin1}
  &(\gamma+\sigma-\epsilon-\lambda)' = \LHS,\\
  &(4\alpha+\kappa-\pi+\nu-\tau)' = \LHS, \\
  &(2(\gamma+\epsilon)\pm i(\kappa-\pi+\tau-\nu))' = e^{\pm it}
  \LHS,
  \\
  \label{eq:spin4}
  &(4\alpha+\pi-\kappa+\tau-\nu+\pm 2i(\epsilon-\gamma+\sigma-\lambda)) =
  e^{\pm 2it} \LHS,\\
  &(\lambda+\sigma-\gamma-\epsilon+\pm i (\pi-\tau))' = e^{\pm
    it}(\LHS -\delta t\mp(i/2)(Dt-\Delta t)),\\
  &(\kappa+\pi+\nu+\tau)'=\LHS-(Dt+\Delta t),\\
  &(\Psi_0 + 2\Psi_2+\Psi_4)' = \LHS,\\
  &(\Psi_0 -\Psi_4 \pm 2i(\Psi_1+\Psi_3))' = e^{\mp it}\LHS,\\
  \label{eq:spin8}
  &(\Psi_0-6\Psi_2 + \Psi_4\pm 4i(\Psi_1-\Psi_3))' = e^{\mp 2it}
  \LHS.
\end{align}

Finally, there are a number of discrete Lorentz transformations that
lie outside the connected component of the identity in $O(\eta)$.
Given a null frame $(\bell,\bm,\bn)$ we define
\begin{equation}\label{eq:ONdyad}
{\bf t}\equiv \tfrac{1}{\sqrt{2}}(\bell+\bn), \quad {\bf x}\equiv
\tfrac 12(\bell-\bn).
\end{equation}
The transformation laws of the connection and curvature scalars
under reflection of the vectors of the orthonormal triad $({\bf
t},{\bf m},{\bf x})$ are also relevant for our purposes and are
given below.

\noindent
Reflection of ${\bf t}$ (`time reversal'):
\begin{eqnarray}
\label{eq:reflt_def}&&{\bf t}\mapsto -{\bf t}\;\;\Leftrightarrow \;\;\bell\mapsto-\bell,\;\bn\mapsto-\bn:\\
\label{eq:reflt_conn}&&\kappa,\tau,\alpha,\pi,\nu\;\;\text{invariant},\quad \sigma,\epsilon,\gamma,\lambda\;\;\text{change sign},\\
\label{eq:reflt_curv}&&\Psi_0,\Psi_2,\Psi_4
\;\;\text{invariant},\quad \Psi_1,\Psi_3\;\;\text{change sign}.
\end{eqnarray}
\noindent Reflection of ${\bf m}$:
\begin{eqnarray}
\label{eq:reflm_def}&&{\bf m}\mapsto -{\bf m}:\\
\label{eq:reflm_conn}&&\kappa,\tau,\alpha,\pi,\nu\;\;\text{change sign},\quad \sigma,\epsilon,\gamma,\lambda\;\;\text{invariant},\\
\label{eq:reflm_curv}&&\Psi_0,\Psi_2,\Psi_4
\;\;\text{invariant},\quad \Psi_1,\Psi_3\;\;\text{change sign}.
\end{eqnarray}
\noindent Reflection of ${\bf x}$:
\begin{eqnarray}
\label{eq:reflx_def}&&{\bf x}\mapsto -{\bf x}
\;\;\Leftrightarrow\;\; \bell
\leftrightarrow \bn:\\
\label{eq:reflx_conn}&&\kappa\leftrightarrow-\nu,\quad
\sigma\leftrightarrow-\lambda,\quad\tau\leftrightarrow-\pi,\quad
\epsilon\leftrightarrow-\gamma,\quad
\alpha'=-\alpha,\\
\label{eq:reflx_curv}&&\Psi_0\leftrightarrow
\Psi_4,\quad\Psi_1\leftrightarrow \Psi_3,\quad \Psi_2'=\Psi_2.
\end{eqnarray}

\section{Petrov-Penrose classification of the three-dimensional Ricci
tensor}\label{app:Petrov-Penrose}


Let $\bell,\bm,\bn$ a null vector triad for which $\Psi_4\neq 0$. We
introduce the three-dimensional analogue of the Petrov-Penrose
classification in terms of the root configurations of the real quartic
\begin{equation}\label{eq:Psi0z}
\Psi_0(z) = \Psi_0 + 4 \Psi_1 z + 6 \Psi_2 z^2 + 4 \Psi_3 z^3 +
\Psi_4 z^4.
\end{equation}
We note that this classification forms a special case of the general
null alignment classification for tensors in arbitrary
dimensions~\cite{mcpp05}, applied here to the three-dimensional
trace-free Ricci tensor $S_{ab}$.
Hence, in addition to the analogues of Petrov types I, II, D,
III, N
(where there are 4 real solutions $z$) and type O,
we have
to account for the possibility that some or all of the roots of
$\Psi(z)$ are complex. We will denote these additional root
configurations as Petrov types IZ (two different real roots, two
complex roots), IZZ (4 complex roots), IIZ (double real root, two
complex roots), and DZ (the double roots are complex conjugate).

Table \ref{tab:petrovtype} summarizes the three-dimensional Petrov
types, corresponding Segre types of the trace-free Ricci operator
$S^a{}_b$, the notation introduced in \cite{cps10} for the latter,
and possible normalized forms; the last column shows the dimension
$s_0$ of the corresponding automorphism group. An alternate
normalized form for type IZZ is given by
\[ \Psi_1=\Psi_3=0, \; \Psi_0=\Psi_4,\; 3|\Psi_2/\Psi_0|<1, \] but it
is related to the form in the table by a Lorentz transformation.
Analogously, a Lorentz-equivalent type D canonical form is
\[
\Psi_1=\Psi_3=0, \; \Psi_0=\Psi_4=-3\Psi_2.
\]
Note that the Ricci-Petrov classification based on null alignment
refines the Ricci-Segre type classification \cite{hmp87}.  The
distinction between Petrov types I and IZZ is the order of the
timelike eigenvalue, relative to the spacelike eigenvalues. Regarding
Segre type $\{21\}$, the spacelike or timelike character of the vector
$S_{ab}l^b$, where the null vector $\bell$ lies in the 2-dimensional
generalized eigenspace but is not an eigenvector, distinguishes
between Petrov types II and IIZ.
Also note that
Petrov type O describes a constant curvature space.


\begin{table}
  \centering
  \begin{tabular}[t]{ccccc}
    Petrov Type & Segre Type&\cite{cps10}& Normalization & $s_0$\\
    I & $\{11,1\}$ &I$_{\Rset}$& $\Psi_1=\Psi_3=0,\; \Psi_0=\Psi_4,\; 3\Psi_2/\Psi_0<-1$ & 0\\
    IZ & $\{1z\bar{z}\}$ &I$_{\Cset}$& $\Psi_1=\Psi_3=0,\; \Psi_0=-\Psi_4\neq 0$, & 0\\
    IZZ & $\{11,1\}$ &I$_{\Rset}$&$\Psi_1=\Psi_3=0, \; \Psi_0=\Psi_4,\;
    3\Psi_2/\Psi_0>1 $ & 0\\
    II  &$ \{21\}$ & II &$\Psi_1=\Psi_3=\Psi_4=0,\; \Psi_2/\Psi_0<0$ & 0\\
    IIZ& $\{21\}$ &II&$\Psi_1=\Psi_3=\Psi_4=0,\; \Psi_2/\Psi_0>0$ & 0\\
    D & $\{(11),1\}$ &D$_s$& $\Psi_0=\Psi_1=\Psi_3=\Psi_4=0,\;\Psi_2\neq 0$ & 1\\
    DZ & $\{1(1,1)\}$ &D$_t$& $\Psi_1=\Psi_3=0, \; \Psi_0=\Psi_4=3\Psi_2\neq 0$& 1\\
    III & $\{3\}$ &III&$\Psi_0=\Psi_1=\Psi_2=\Psi_4=0,\; \Psi_3=1$ & 0\\
    N & $\{(21)\}$ &N& $\Psi_0=\Psi_1=\Psi_2=\Psi_3=0,\; \Psi_4=\pm 1$ &  1\\
    O & $\{(11,1)\}$&O& $\Psi_0=\Psi_1=\Psi_2=\Psi_3=\Psi_4=0$ & 3\\[4pt]
  \end{tabular}
  \caption{The three-dimensional Petrov-Segre type}
  \label{tab:petrovtype}
\end{table}

\section{$\CH_1$ structure equations}
\label{sec:ch1}
This appendix is devoted to an analysis of the algebraic data and the
structure equations that underly curvature homogeneous geometries. In
what follows a crucial, albeit technical, innovation allows us to
simplify the form of higher order Cartan invariants by replacing them
with certain connection scalars. The general theory is detailed in
\cite{milpel09}.  For the sake of concreteness we limit the discussion
to the case of $\CH_1$ geometries.
We begin by recalling some preliminary notation
and theory, and then turn to the description of $\CH_1$ data and
structure equations, which we call a $\CH_1$ configuration.

Let $\be_a,\; a=1,\ldots, n$ be a basis of $V$, and $\bA_\alpha,\;
\alpha=1,\ldots, n(n-1)/2$ a basis of $\fo(\eta)$.  Let
$A^a{}_{b\alpha}$ denote the matrix components of $\bA_\alpha$; i.e.,
\[ \bA_\alpha \cdot \be_b = A^a{}_{b\alpha} \be_a.\]
Let $C^\alpha{}_{\beta\gamma}$ be the corresponding structure
constants:
\[  [\bA_\beta, \bA_\gamma ] = C^\alpha{}_{\beta\gamma} \bA_\alpha,\quad
A^a{}_{e\beta} A^e{}_{b\gamma} - A^a{}_{e\gamma} A^e{}_{b\beta} =
A^a{}_{b\alpha} C^\alpha{}_{\beta\gamma}.
\]

Let $\omega^a$ be an $\eta$-orthogonal coframe, $\Gamma^\alpha,
\Omega^\alpha$ the corresponding connection 1-form and curvature
2-form, respectively.  The latter are determined by the first and second
structure equations:
\begin{align}
  \label{eq:1stseq}
  \d \omega^a &= - A^a{}_{b\alpha} \Gamma^\alpha\wedge\omega^b,\\
  \label{eq:2ndseq}
  \d \Gamma^\alpha &=-\frac{1}{2} C^\alpha{}_{\beta\gamma} \Gamma^\beta\wedge
  \Gamma^\gamma  +  \Omega^\alpha,
\end{align}
where $\Gamma^\alpha{}_a,R_{abcd}$ are the connection and curvature
components, respectively:
\begin{align}
  &\Gamma^\alpha = \Gamma^\alpha{}_a \omega^a,\\
  &\Omega^\alpha = \frac{1}{2} R^\alpha{}_{cd} \, \omega^c\wedge
  \omega^d,\quad R_{abcd} = A_{ab\alpha}R^\alpha{}_{cd}.
\end{align}
The exterior derivative gives the algebraic and differential Bianchi
relations:
\begin{align}
  \label{eq:algbianchi}
  &A^a{}_{b\alpha}\,\Omega^\alpha\wedge \omega^b= 0,\\
  \label{eq:diffbianchi}
  &d\Omega^\alpha =  \frac{1}{2} C^\alpha{}_{\beta\gamma}
  \Omega^{\beta}\wedge\Gamma^{\gamma}.
\end{align}
In Appendix \ref{app:NPformalism} we introduced a convenient
formalism that assigns specific symbols to the $\Gamma^\alpha{}_a,
R_{abcd}$ when $n=3$. Our three-dimensional formalism is a suitable
reduction of the well-known 4-dimensional Newman-Penrose (NP)
formalism.  In this reduced, 3-dimensional formalism, the first
structure equations \eqref{eq:1stseq} correspond to the commutator
relations \eqref{eq:commutator1}-\eqref{eq:commutator3}; the second
structure equations correspond to reduced NP equations
\eqref{eq:NP1}-\eqref{eq:NP9}.  The component versions of the
Bianchi relations are given by
\eqref{eq:bianchi1}-\eqref{eq:bianchi3}.  The details of the
formalism and of the reduction from 4 to 3 dimensions were given in
Appendix \ref{app:NPformalism}.

Next, suppose that the $\CH_1$ condition holds and let $\omega^a$ be
a curvature normalized $\eta$-orthogonal coframe.  By the $t_1=0$
assumption,
\begin{equation}
  \label{eq:R0=tR0}
  R_{abcd} = \tR_{abcd},\quad R_{abcd;e} = \tR_{abcde},
\end{equation}
where the right hand sides denote arrays of constants.  Let
$G_0\subset O(\eta)$ be the automorphisms of $\tR_{abcd}$ and $G_1
\subset G_0$ the automorphisms of $\tR_{abcde}$.  Hence,
\eqref{eq:R0=tR0} fixes the choice of coframe up to a $G_1$ gauge
transformation.  Set
\[ \fg_{-1}:= \fo(\eta),\quad s_{-1} = \dim \fg_{-1} = n(n-1)/2.\]
Let $\fg_0,\fg_1$ denote the Lie algebra of $G_0, G_1$ respectively.
Introduce an adapted basis of $\fg_1\subset\fg_0 \subset \fg_{-1}$
consisting of
\[ (\bA_\xi, \bA_\lambda, \bA_\rho),\quad \xi = 1,\ldots s_1,\;
\lambda = s_1+1,\ldots s_0,\; \rho = s_0+1,\ldots, s_{-1},\] where
the $\bA_\xi$ are a basis of $\fg_1$, the $\bA_\lambda$ are a basis
of $\fg_{0}/\fg_{1}$ and the $\bA_\rho$ are a basis of $\fg_{-1}/
\fg_0$. By the usual definition of the covariant derivative one has
\begin{equation}\label{def:covder-Riemann} R_{abcd;e}  =  R_{abcd,e} +\Gamma^\alpha{}_e (\bA_\alpha \cdot
R)_{abcd},
\end{equation}
where for a rank $k$ tensor $T_{a_1\ldots a_k}$ the notation
\[ (\bA\cdot T)_{a_1\ldots a_k} = - \sum_{i=1}^k A^b{}_{a_i} T_{a_1
  \cdots \widehat{a_i} b \cdots a_k} ,\quad \bA\in \End(V)\] denotes
the infinitesimal action of a linear transformation on a covariant
tensor.

Since the $R_{abcd}$ are constant and since $\fg_0$ is the
annihilator of $\tR_{abcd}$, \eqref{def:covder-Riemann} becomes
\begin{equation}
  \label{eq:R1Gamma}
  \tR_{abcde} = \Gamma^\rho{}_e (\bA_\rho \cdot \tR)_{abcd}.
\end{equation}
Equation \eqref{eq:R1Gamma} describes a linear system in
$\Gamma^\rho{}_e$ with maximal rank.  Hence,
$\Gamma^\rho{}_e=\tGamma^\rho{}_e$, where the latter are constants
rationally dependent on $\tR_{abcd}, \tR_{abcde}$. Therefore, a
$\CH_1$ geometry is determined by constants
$\tR^\alpha{}_{ab}=-\tR^\alpha{}_{ba}$ and constants
$\tGamma^\rho{}_a$  such that the following relations hold, relative
to a normalized $\eta$-orthogonal coframe:
\begin{align}
  \label{eq:tRabcd}
  \tR_{abcd} &= A_{ab\alpha} \tR^\alpha{}_{cd}, \\
  \label{eq:tRabcde}
  \tR_{abcde} &= \tGamma^\rho{}_a (\bA_\rho \cdot \tR)_{abcd}.
\end{align}
Moreover, the Bianchi relations \eqref{eq:algbianchi},
\eqref{eq:diffbianchi} impose the following linear, respectively,
bilinear constraints on the above constants:
\begin{align}
  \label{eq:configabianchi}
  &\tR^\alpha{}_{[bc}  A^a{}_{d]\alpha} = 0,  \\
  \label{eq:configdbianchi}
  &(\bA_\rho\cdot \tR)^\alpha{}_{[ab} \tGamma^\rho{}_{c]} = 0.
\end{align}

Similar to \eqref{eq:R1Gamma}, the second order derivative of
curvature is given by
\begin{equation}
  \label{eq:R2Gamma}
  R_{abcd;ef}  = \tGamma^\rho{}_f (\bA_\rho\cdot \tR)_{abcde} +
  \Gamma^\lambda{}_f (\bA_\lambda \cdot \tR)_{abcde},
\end{equation}
relative to a normalized coframe.  Since the residual frame freedom is
$G_1$, the scalars $\Gamma^\lambda{}_a$ obey an algebraic,
$G_1$-transformation law. The second structure equations
\eqref{eq:2ndseq} impose the following linear constraints on these
scalars:
\begin{equation}
  \label{eq:confignp}
  \tR^\rho{}_{ab} =  C^\rho{}_{{\rho_1}{\rho_2}} \tGamma^{\rho_1}{}_a
  \tGamma^{\rho_2}{}_b-2  \tGamma^\rho{}_c \tGamma^{\rho_1}{}_{[a} A^c{}_{b]{\rho_1}} -
  2(\bA_\lambda\cdot \tGamma)^\rho{}_{[a} \Gamma^\lambda{}_{b]},
\end{equation}
where $\rho_1,\rho_2 = s_0+1,\ldots, s_{-1}$ have the same range
as $\rho$, and the following differential relations:
\begin{equation}
  \label{eq:npcomponent}
  \Gamma^\lambda{}_{[a,b]} = (\bA_\xi\cdot \Gamma)^\lambda{}_b
  \Gamma^\xi{}_a -\frac{1}{2}\Upsilon^\lambda{}_{ab},
\end{equation}
where
\begin{align}\label{eq:npcomponent2}
  \Upsilon^\lambda{}_{ab} &:=
  \tR^\lambda{}_{ab}- C^\lambda{}_{{\rho_1}{\rho_2}}\tGamma^{\rho_1}{}_a
  \tGamma^{\rho_2}{}_b + 2 \Gamma^{\lambda}{}_{c\hphantom{[} }
  \tGamma^{\rho_1}{}_{[a}
  A^c{}_{b]{\rho_1}}-
  2\Gamma^{\lambda_1} {}_{a}\tGamma^{\rho_1}{}_{b}C^\lambda{}_{{\lambda_1} {\rho_1}}  \\
  \nonumber &\qquad\qquad+ 2\, \Gamma^{\lambda}{}_{c\hphantom{[} }
  \Gamma^{\lambda_1}{}_{[a} A^c{}_{b]\lambda_1} -
  C^\lambda{}_{\lambda_1\lambda_2}\Gamma^{\lambda_1}{}_a
  \Gamma^{\lambda_2}{}_b,
\end{align}
and where $\lambda_1,\lambda_2=s_1+1,\ldots s_0$ have the same range
as $\lambda$.  We will refer to constants $\tR^\alpha{}_{ab},
\tGamma^\rho{}_a$ and scalars $\Gamma^\lambda{}_a$ together with
constraints \eqref{eq:configabianchi}-\eqref{eq:configdbianchi} and
\eqref{eq:confignp}-\eqref{eq:npcomponent2} as a $\CH_1$
configuration \cite{milpel09}.


It is well known that the $O(\eta)$ metric equivalence problem has
trivial essential torsion \cite[Section 12]{olver}.  However, if we
reduce the structure group to $G_0\subset O(\eta)$ by means of
curvature normalization, we obtain the following reduced first
structure equations:
\[  \d \homega^a = -\sum_{\xi=1}^{s_0} A^a{}_{b\xi} \,\hGamma^\xi
\wedge \homega^b
+ \sum_\rho A^a{}_{b\rho}\hGamma^\rho{}_b\, \homega^a \wedge \homega^b ,\]
where
\[ \hbomega = X \bomega,\quad \hGamma = X\Gamma X^{-1} - \d X
X^{-1},\quad X\in G_0,\] are the $G_0$-lifted 1-forms.  The scalars
$\hGamma^\rho{}_a$ are well defined because the Maurer-Cartan term $\d
X X^{-1}$ takes values in $\fg_0$.  Consequently, the scalars
$\Gamma^\rho{}_a$ have a $G_0$-transformation law that does not depend
on $\d X$, and therefore constitute the essential torsion for the 1st
iteration of the equivalence method.  Thus, the scalars
$\Gamma^\rho{}_a$ can be interpreted as the essential torsion arising
from the reduced $G_0$-equivalence problem and the
$\Gamma^\lambda{}_a$ as essential torsion in the next iteration of the
$G_1$-equivalence problem. Therefore, we refer to the former as 1st
order torsion, and to the latter as 2nd order torsion.

By virtue of \eqref{eq:R1Gamma}, normalizing the $\Gamma^\rho{}_a$
is equivalent to normalizing $R_{abcd;e}$. The 1st order
normalization reduces the structure group to $G_1\subset G_0$.  If
we suppose that the $\CH_1$ property holds, then the resulting
invariants are the constants $\tGamma^\rho{}_a$.  The scalars
$\Gamma^\lambda{}_a$ are the essential torsion of the 2nd iteration
of the equivalence method.  By virtue of \eqref{eq:R1Gamma}
\eqref{eq:R2Gamma}, the 2nd order Cartan invariants are functions of
the 0th order Cartan invariants $R^\alpha{}_{ab}$ and the 1st and
2nd order torsion scalars $\tGamma^\rho{}_a, \Gamma^\lambda{}_a$.
Inversely, because of \eqref{eq:configabianchi}, $\tR^\alpha{}_{ab}$
is linearly dependent on $\tR_{abcd}$, while \eqref{eq:tRabcde} and
\eqref{eq:R2Gamma} can be solved to give $\tGamma^\rho{}_a$ as
functions of $\tR_{abcd},\tR_{abcde}$ and $\Gamma^\lambda{}_a$ as a
function of $\tR_{abcd}, \tR_{abcde}, R_{abcd;ef}$.

\end{document}